\documentclass[journal,twoside,web]{ieeecolor}
\usepackage{generic}
\usepackage{cite}
\usepackage{color}

\definecolor{purple}{rgb}{0.75,0.194,0.34}
\usepackage{amsmath,amssymb,amsfonts}
\usepackage{algorithmic}
\usepackage{graphicx}
\usepackage{algorithm,algorithmic}

\makeatletter
\let\NAT@parse\undefined
\makeatother

\usepackage[colorlinks=true,
    linkcolor=blue,
    citecolor=blue]{hyperref}

\newtheorem{theo}{Theorem}
\newtheorem{assu}{Assumption}
\newtheorem{lemma}{Lemma}

\newtheorem{remark}{Remark}

\newtheorem{crit}{Criterion}

\usepackage{textcomp}
\def\BibTeX{{\rm B\kern-.05em{\sc i\kern-.025em b}\kern-.08em
    T\kern-.1667em\lower.7ex\hbox{E}\kern-.125emX}}
\begin{document}
\title{Output Feedback MPC with Adaptive Tubes}
\author{Anchita Dey*, \IEEEmembership{Graduate Student Member, IEEE}, and Shubhendu Bhasin
\thanks{This paragraph of the first footnote will contain the date on 
which you submitted your paper for review. }
\thanks{Anchita Dey and Shubhendu Bhasin are with the department of Electrical Engineering, Indian Institute of Technology Delhi, Hauz Khas, New Delhi, Delhi 110016, India {\tt\small anchitadey.ee.india@gmail.com, sbhasin@ee.iitd.ac.in.}}
\thanks{*Corresponding author}
}

\maketitle
\begin{abstract}
An output feedback model predictive control (MPC) framework with \textit{adaptive tubes} is proposed for linear time-invariant systems subject to parametric and additive uncertainties. An adaptive observer provides point estimates of the system state, model parameters, and initial condition, while jointly updating the corresponding sets containing the true parameters and initial state. These estimates parameterize the constrained optimal control problem, enabling constraint tightening, terminal ingredients, and tube geometry to be updated as the estimates evolve. In contrast to standard robust tube-based MPC formulations, the proposed approach does not require a common quadratically stabilizing linear feedback gain across the parametric uncertainty set. As the available uncertainty information improves, the tube geometry evolves accordingly, resulting in an adaptive tube MPC framework with improved performance over time. Recursive feasibility and robust exponential stability are established, and a numerical example is presented.
\end{abstract}
\begin{IEEEkeywords}
Model predictive control (MPC), Uncertain systems, Output feedback and Observers, Constrained control
\end{IEEEkeywords}

\section{Introduction}

Model predictive control (MPC) is a well-known technique for optimal control of systems subject to hard constraints on states and inputs. Classical MPC \cite{kouvaritakis2016model,Rawlings2017MPC} relies on a system model to predict future trajectories and compute an optimal control sequence over a finite horizon. The resulting constrained optimal control problem (COCP) requires accurate model knowledge to ensure feasibility and stability. In practice, however, model uncertainty and incomplete state measurements limit the direct applicability of such formulations. Both limitations must be addressed simultaneously in any practically deployable MPC framework.

Robust MPC methods are predominantly formulated either as min-max optimization or LMI-based schemes, or within the robust tube-based framework. These approaches are typically designed for systems with known parameters subject to additive disturbances \cite{CHISCI20011019,langson2004robust,rakovic2012homothetic}, or for systems with parametric uncertainty, with or without additive disturbances \cite{fleming2015,KOTHARE19961361,buj2021}. In the latter case, however, no attempt is made to estimate or learn the uncertain parameters. To mitigate conservatism and enhance performance, adaptive MPC has gained attention \cite{jafari2017adaptive,dhar2021indirect,kohlerrampc,lorenzen2019robust,lu2023robust,anchita2025,anch}. In these formulations, uncertain parameters, or the sets containing them, or both, are updated online, while robustness to estimation errors is maintained through tube-based constructions. Homothetic tube parameterizations \cite{langson2004robust,rakovic2012homothetic} are particularly attractive due to their tractability. Recent works have also explored refined tube representations; \cite{luflexitube} proposes flexible tubes based on zonotopes, where the tube cross-sectional parameters are optimized online. The framework in \cite{kogelblending} considers systems with additive disturbances and constructs tubes via online blending of precomputed sets associated with different stabilizing feedback gains. Self-tuning tube schemes for systems with both parametric and additive uncertainties are developed in \cite{tranosrusso}. The recent work in \cite{wang2025tube} refines classical rigid tube MPC \cite{langson2004robust} for LTI systems with additive disturbances by introducing time-varying tube cross-sections within a state feedback design framework, while assuming known system parameters.

The MPC frameworks for systems with parametric uncertainty and state constraints, such as \cite{buj2021,jafari2017adaptive,dhar2021indirect,lorenzen2019robust,lu2023robust,tranosrusso,kohlerrampc,luflexitube,anchita2025,anch}, typically assume the existence of a quadratically stabilizing linear state feedback gain valid over the entire uncertainty set \cite{khargonekar1990robust}. While this assumption enables tractable design, it restricts the admissible uncertainty sets \cite{Gahinet1994,barmish1985necessary}, and such a common gain may not exist for all parameter realizations \cite{Petersen1985}. Data-driven approaches \cite{berberich2020data,coulson2019data} avoid this requirement, but rely on offline representations that may be sensitive to changes in system dynamics and typically impose constraints on inputs and outputs rather than all states.

In this paper, an output feedback MPC framework with adaptive tubes is proposed for discrete-time LTI systems subject to parametric and additive uncertainties. Similar to \cite{anchita2025}, an adaptive observer is employed together with a reformulated COCP, with two key modifications. First, point estimates of the initial condition are obtained in addition to the state and parameter estimates. Second, available data is used to jointly update the uncertainty sets for the parameters and the initial condition using set membership-based identification \cite{boydset}. Further, the linear regression for parameter adaptation is augmented by combining past and current data, enabling improved convergence \cite{kreisselmeier1977adaptive,lion1967rapid,dey2025initialexcitationbasedadaptiveobservers}. To formulate the COCP, the system dynamics is expressed using the current estimates, with the resulting mismatch treated as a lumped additive disturbance. Based on this representation, constraint tightening, terminal set computation, and tube design are carried out as in existing robust tube MPC \cite{rakovic2012homothetic,CHISCI199815}. The COCP components are designed for the current parameter estimate, and therefore require only a stabilizing feedback gain, terminal cost, and terminal set for that estimate, eliminating the need for a common design across all admissible parametric uncertainties.

As the estimates and uncertainty sets evolve, the tube geometry and the COCP components change accordingly, leading to an \emph{adaptive tube} MPC framework. While contraction of the uncertainty sets preserves recursive feasibility and stability, the central technical challenge is ensuring that the time-varying COCP retains its stability and feasibility properties despite changes in the prediction model induced by point estimate updates. The proposed solution avoids the standard requirement of a common quadratically stabilizing linear feedback gain over the entire uncertainty set. Instead, a compatibility condition is imposed on the successive terminal ingredients that ensures that the Lyapunov function remains non-increasing along the closed-loop trajectories---a condition that is strictly less restrictive than the quadratic stabilization over the entire uncertainty set. Recursive feasibility is maintained using a backup solution from the previous time step. Improved uncertainty characterization results in tighter prediction bounds, and appending the state estimation error sets to the homothetic tube returned by the COCP yields an outer tube containing the true state trajectory. 

The contributions are summarized as follows:
\begin{itemize}
    \item An adaptive observer design that estimates the initial state along with system parameters and state, and jointly updates the associated uncertainty sets using set membership-based identification.
    \item An MPC formulation in which the COCP components, including constraint tightening, terminal set, and tube geometry, are updated based on the available estimates.
    \item Recursive feasibility and robust exponential stability guarantees for the proposed adaptive tube MPC framework, without requiring a common quadratically stabilizing linear feedback gain.
\end{itemize}

\textit{Notations:} $\oplus$ denotes Minkowski sum defined as $\mathbb{P}\oplus\mathbb{Q}\triangleq \{p+q\;|\;p\in\mathbb{P},\;q\in\mathbb{Q}\}$ and $p\oplus\mathbb{Q}\triangleq \{p+q\;|\; q\in\mathbb{Q}\}$. $\ominus$ denotes Pontryagin difference defined as $\mathbb{P}\ominus\mathbb{Q}\triangleq\{p\;|\;p+q\in\mathbb{P}\;\forall q\in\mathbb{Q}\}$. The multiplication of any matrix $V$ with a set $\mathbb P$ is defined as $V\mathbb P\triangleq \{ Vp\;|\;\forall p\in\mathbb P  \} $, where $V$ and $p$ are of conforming dimensions. For any matrix $V$, $V(:,p:q)$ and $V(r:s,p:q)$ represent the sub-matrices made of all rows and $r$-th to $s$-th rows, respectively, with $p$-th to $q$-th columns. The convex hull of all elements in $\mathbb{P}$ is denoted by $\mathrm{\mathbf{co}} (\mathbb{P})$. The notation $V\succ 0$ $(\succeq 0)$ represents that the matrix $V$ is symmetric positive-definite (positive semi-definite). $\|\cdot\|_\infty$ and $\|\cdot\|_2$ represent $\infty$ and $2$-norms, of a vector (or induced $\infty$ and $2$-norms of a matrix), respectively. $\|p\|^2_V\triangleq p^\top V p$ for a vector $p$. The identity matrix and zero matrix are denoted by $I_n\in\mathbb{R}^{n\times n}$ and $0_{m\times n}\in\mathbb{R}^{m\times n}$, respectively. $\{\alpha(i)\}_{i=p:q}$ represents the set $\{\alpha(p), \alpha(p+1),...,\alpha(q)\}$, where $\alpha(i)$ is a function of $i$, and $p$, $q>p$ are integers. $\mathbb{I}_p^q$ represents the set of all integers from $p$ to $q$. The value of $p$ at time $t+i$ predicted at time $t$ is denoted by $p_{i|t}$. $(\cdot)^*$ denotes the optimal value returned by solving the COCP. Any signal belonging to $\mathcal L_\infty$ implies it is bounded. The operation $\mathrm{\mathbf{vec}}(\cdot)$ on a matrix $W=[ W_1^\top  \;\; W_2^\top  \;...\;W_p^\top ]^\top \in\mathbb{R}^{p\times q}$, where $W_i^\top \in\mathbb{R}^q$ $\forall i\in\mathbb{I}_1^p$ is defined as $\mathrm{\mathbf{vec}}(W)\triangleq 
   [ W_1 \;\; W_2 \;...\;W_p]^\top \in\mathbb{R}^{pq}$, and $\mathrm{\mathbf{vec}}^{-1}(\cdot,\cdot)$ is the inverse operation of $\mathrm{\mathbf{vec}}(\cdot)$, i.e., $\mathrm{\mathbf{vec}}^{-1}\left( [ W_1 \;\; W_2 \;...\;W_p]^\top ,q  \right)\triangleq W$.

\section{Problem Statement}
The objective is to stabilize the origin of a discrete-time LTI system in the following observable canonical realization:

\begin{subequations}\label{sys1}
    \begin{align}
  &  x_{t+1}= \underbrace{\left[\begin{array}{c|c} {\mathcal{A}}\;&\begin{array}{cc}
         I_{n-q}\\0_{q\times (n-q) }  \end{array} \end{array} \right]}_{=:A} x_t+Bu_t+d_t \label{sys1x}\\
  &  y_t= \underbrace{\begin{bmatrix}    I_q & 0_{q\times (n-q)}   \end{bmatrix}}_{=:C} x_t \label{sys1y}
\end{align}
\end{subequations}
subject to hard constraints
\begin{align}
    x_t\in\mathbb{X},\;u_t\in\mathbb U\quad \forall t\in\mathbb I_0^\infty. \label{hc}
\end{align}
Here, $x_t\in\mathbb{R}^n$, $u_t\in\mathbb{R}^m$, $y_t\in\mathbb R^q$, and $d_t\in\mathbb{R}^n$ denote the state, input, output, and disturbance at time $t$, respectively. The disturbance $d_t$ and the parameters $\mathcal A\in\mathbb R^{n\times q}$, $B\in\mathbb R^{n\times m}$ are unknown. Only output measurements $y_t$ are available. The constraint sets $\mathbb X$ and $\mathbb U$ are known convex polytopes containing their respective origins in the interior.

To ensure robust feasibility and stability, it is standard to assume that uncertainties belong to known bounded sets \cite{kouvaritakis2016model,Rawlings2017MPC,langson2004robust,CHISCI20011019,fleming2015,KOTHARE19961361,buj2021,jafari2017adaptive,kohlerrampc,kogelblending,tranosrusso,wang2025tube,lorenzen2019robust,dhar2021indirect,anchita2025,anch}. For the problem of interest, the additive and parametric uncertainties are assumed to lie in known convex bounded sets, with $d_t\in\mathbb D\subseteq\mathbb R^n$ and $\psi\triangleq\begin{bmatrix}
    A & B
\end{bmatrix}\in\Psi\subseteq\mathbb R^{n\times (n+m)}$. The set $\mathbb D$ is a polytope containing the origin in its interior. The set $\Psi$ is defined as
\begin{align}
    \Psi\triangleq \mathrm{\mathbf{co}}\left(\psi^{[1]},\;\psi^{[2]},...,\psi^{[L_p]}\right),
\end{align}
where $L_p$ is a known finite positive integer. For each $\hat\psi\triangleq\begin{bmatrix}
        \hat A & \hat B
    \end{bmatrix}\in\Psi$, there exists a pair $(P,K)$ specific to $\hat \psi$ such that $P\succ 0$ and
    \begin{align}\label{originaldare}
        P-(\hat A+\hat B K)^\intercal P (\hat A +\hat B K)-Q - K^\intercal R K \succeq 0,
    \end{align}
for given weighting matrices $Q, R \succ 0$, with $P, Q \in \mathbb{R}^{n \times n}$, $R \in \mathbb{R}^{m \times m}$, and $K \in \mathbb{R}^{m \times n}$. This condition ensures that each system corresponding to $\hat\psi \in \Psi$ is stabilizable and admits a quadratic Lyapunov function of the form $x^\top P x$ under a suitable linear state feedback gain $K=-(R + B^\top P B)^{-1} B^\top P A$, and is a standard assumption in MPC formulations \cite{kouvaritakis2016model,rakovic2012homothetic,langson2004robust}. Unlike standard robust tube MPC formulations that require a common quadratically stabilizing feedback gain over the entire uncertainty set \cite{buj2021,jafari2017adaptive,dhar2021indirect,lorenzen2019robust,lu2023robust,tranosrusso,kohlerrampc,luflexitube,anchita2025,anch}, condition \eqref{originaldare} only assumes existence of a stabilizing pair for each admissible parameter realization individually\footnote{{This distinction is exploited later in the proposed adaptive tube MPC framework.}}. 

Further, a known convex polytope $\mathbb X_0\subseteq \mathbb X$ with $L_x$ vertices is considered for the initial state $x_0$, where $L_x$ is a known finite positive integer.

In the ideal case, when the states and parameters are known and no disturbance is present, the origin of \eqref{sys1} can be stabilized by implementing the classical MPC COCP given below.
\begin{align*}
\text{ }\min_{\{u_{i|t}\}_{i=0:N-1}} &\sum_{i=0}^{N-1}\left( \|x_{i|t}\|^2_{{Q}}+\|u_{i|t}\|_R^2\right)+\|x_{N|t}\|^2_{{P}}\\
\text{subject to }&x_{0|t}=x_t,\\
 &x_{i+1|t}=Ax_{i|t}+Bu_{i|t}\;\;\;\;\forall i\in\mathbb{I}_{0}^{N-1},\\
&x_{i|t}\in\mathbb{X}\hspace{0.2cm}\forall i\in\mathbb{I}_0^{N},\;\;u_{i|t}\in\mathbb{U}\hspace{0.2cm}\forall i\in\mathbb{I}_{0}^{N-1},\\
 &x_{N|t}\in\mathbb{X}_{TS}\subseteq\mathbb{X},
\end{align*}
where $N$ is the prediction horizon length, and $\mathbb{X}_{TS}$ is the terminal set \cite[Ch.~2]{kouvaritakis2016model}. The set $\mathbb{X}_{TS}$ is generally constructed to be the maximal positive invariant set for the dynamics $z_{t+1}=(A+BK)z_t$ with admissibility to constraints $\mathbb{X}_{TS}\subseteq \mathbb X$ and $K\mathbb{X}_{TS}\subseteq \mathbb U$. 

Due to parameter uncertainty, external disturbance and unavailability of full state information, this COCP cannot be directly implemented. An alternative is to estimate these quantities online and reformulate the COCP to account for estimation errors and disturbances, as explored in \cite{anchita2025,anch}.

However, these approaches, along with several tube-based MPC formulations for systems with parametric uncertainty \cite{buj2021,jafari2017adaptive,dhar2021indirect,lorenzen2019robust,lu2023robust,tranosrusso,kohlerrampc,luflexitube}, rely on the existence of a quadratically stabilizing linear feedback gain over the uncertainty set. This requirement stems from the use of a single COCP designed to accommodate all admissible uncertainties.

The objective of this work is therefore to develop an output feedback MPC framework for stabilizing the origin of \eqref{sys1}, while maintaining recursive feasibility, robust constraint satisfaction of \eqref{hc}, and stability, without requiring a common quadratically stabilizing linear feedback gain over the entire uncertainty set. To this end, an adaptive observer is employed to obtain online point and set estimates of the uncertain parameters and initial condition. These updated uncertainty descriptions are subsequently incorporated into the COCP through the constraint tightening, terminal ingredients, and tube construction. As the uncertainty sets and point estimates evolve with incoming data, the associated tube geometry also changes accordingly, leading to an adaptive tube MPC framework with progressively reduced conservatism. The adaptive observer is designed in the following section.

\section{Adaptive observer with set-membership identification}
The plant dynamics in \eqref{sys1x} can be rewritten as
\begin{align}
    x_{t+1}&=Fx_t+(A-F)x_t+Bu_t+d_t,\nonumber\\
   & =F x_t + \begin{bmatrix}
       Y_t & U_t
   \end{bmatrix}\underbrace{\begin{bmatrix}
       (a-f)^\intercal & b^\intercal
   \end{bmatrix}^\intercal}_{=:p}+d_t,
\end{align}
where $p\in\mathbb R^{qn+mn}$, $Y_t\triangleq I_n\otimes y_t^\intercal\in\mathbb R^{n\times qn}$, $U_t\triangleq I_n\otimes u_t^\intercal\in\mathbb R^{n\times mn}$, and $F\in\mathbb R^{n\times n}$ is a user-defined Schur stable matrix structurally similar to $A$ in \eqref{sys1x} given by, 
\begin{align}
  & F\triangleq \left[
    \begin{array}{c|c}
   {\mathcal{F}}\;&\begin{array}{cc}
         I_{n-q}\\0_{q\times (n-q) }
    \end{array}
    \end{array}\right],\text{ where }\mathcal F\in\mathbb R^{n\times q}\label{obsF}\\
   \text{and }& a\triangleq \mathrm{\mathbf{vec}} \left( \mathcal A \right)\in\mathbb R^{qn},\;b\triangleq \mathrm{\mathbf{vec}}(B)\in\mathbb R^{mn},\\
   & f\triangleq \mathrm{\mathbf{vec}} \left( \mathcal F \right)\in\mathbb R^{qn}.\label{obsF2}
\end{align}

The mapping from $\psi$ to $p$ for a given $f$ is a bijection, and therefore, it suffices to estimate $p$, which belongs to the set 
\begin{align}\label{PiSetdefn}
    \Pi\triangleq \left\{ \mathcal Z_1 \left(\bar\psi, f, q \right) \;\middle|\; \forall \bar\psi\in\Psi ,\;f\text{ in \eqref{obsF2}} \right\}.
\end{align}
The explicit definitions of the bijective mapping $\mathcal Z_1$ and its inverse are provided in Appendix \ref{app:z1z1inv} for completeness. The inverse mapping is used later to recover the estimate of $\psi$ (and therefore of $A$ and $B$) from the estimate of $p$.

\subsection{Formulation of filter-based regression}

To estimate $p$ and $x_t$ using $u_t$ and $y_t$, filter matrices $M_t\in\mathbb R^{n\times (qn+mn)}$ are defined to construct a linear regression. The filter variables are computed using the dynamics
\begin{align}
    M_{t+1}=FM_t+\begin{bmatrix}
        Y_t & U_t
    \end{bmatrix},\;\;M_0=0_{n\times (qn+mn)}.\label{FilterM}
\end{align}
The plant state and output can then be expressed as 
\begin{subequations}
\begin{align}
x_t&=M_tp+F^tx_0+\sum_{k=0}^{t-1}F^{t-1-k}d_k,\label{xtrue}\\
y_t&=CM_t p+CF^t x_0 + \underbrace{C\sum_{k=0}^{t-1}F^{t-1-k}d_k}_{=:\eta_t},\label{ytrueobs}
\end{align}
\end{subequations}
respectively. Let $\hat p_t$ and $\hat x_{0_t}$ be the estimates of $p$ and $x_0$, respectively, at time $t$. The adaptive observer state and output are then defined as
\begin{subequations}\label{adapobs1}
\begin{align}
\hat x_t&=M_t\hat p_t+F^t \hat x_{{0_t}},  \label{xhat}\\
\hat y_t&=CM_t \hat p_t+CF^t \hat x_0=\underbrace{\begin{bmatrix}
     CM_t & CF^t
 \end{bmatrix}}_{=:\omega_t} \underbrace{\begin{bmatrix}
     \hat p_t\\ \hat x_{0_t}
 \end{bmatrix}}_{=:\hat\theta_t}, \label{yhat}
\end{align}
\end{subequations}
respectively, where $\omega_t\in\mathbb R^{q \times (qn+mn+n)}$ is the regressor, and $\hat \theta_t\in\mathbb R^{qn+mn+n}$ is the estimate of
\begin{align}
    \theta\triangleq \begin{bmatrix}
    p^\intercal & x_0^\intercal
\end{bmatrix}^\intercal\in\mathbb R^{qn+mn+n}.
\end{align}

Since $x_0\in\mathbb X_0$, the choice of the initial state estimate $\hat x_{0_t}$ is restricted to $\mathbb X_0$. Let the set containing the initial state estimation error $\tilde{x}_{0_t}$ be given by
\begin{align}\label{X0ttildeset0}
    \widetilde{\mathbb X}_{0_t}\triangleq\{x\;|\;x+\hat x_{0_t}\in \mathbb X_0\}.
\end{align}
Then, $\widetilde{\mathbb X}_{0_0}$ is a known convex polytope containing the origin. In addition, for characterizing bounded sets for the state estimation errors $\tilde x_t\triangleq x_t-\hat x_t$, the following standard assumption is made \cite{anchita2025,anch,kogel2017robust,mayne2006robust,mayne2009robust}.

\begin{assu}\label{Assum:1}
    The set $\widetilde{\mathbb X}_{0_0}\triangleq \mathbb{X}_0\ominus \hat x_{0_0}$ is positive invariant to the dynamics $z_{t+1}=Fz_t$, i.e., 
    \begin{align}\label{X0inv}
        F\widetilde{\mathbb X}_{0_0}\subseteq \widetilde{\mathbb X}_{0_0}.
    \end{align}
\end{assu}

\begin{remark}
If the given set $\mathbb X_0$ does not satisfy \eqref{X0inv}, a suitable outer approximation can be used to ensure invariance.
\end{remark}

\subsection{Modification of the linear regression}\label{augmentedeq}
While an update law for learning $\theta$ can be derived using \eqref{yhat}, the regressors are further modified to improve convergence of the observer by including an additional $qn+mn+n-1$ independent error signals \cite{kreisselmeier1977adaptive,lion1967rapid}. Additionally, update laws for the uncertainty sets $\Psi$ and $\mathbb X_0$ are constructed using the modified regression. To this end, define the following matrices:
\begin{subequations}
\begin{align}
   & \mathcal{W}_t\triangleq \begin{bmatrix}
    \omega_{t_{(0)}}^{\top}  &  \omega_{t_{(1)}}^{\top}  &   ... &  \omega_{t_{(qn+mn+n-1)}}^{\top} 
\end{bmatrix}^{\top}\nonumber\\
&\;\;\;\;\;\;\;\;\;\;\;\;\;\;\;\;\;\;\;\;\;\in\mathbb R^{q(qn+mn+n)\times(qn+mn+n)}, \label{Wtdef} \\
& \mathcal{Y}_t\triangleq \begin{bmatrix}
    y_{t_{(0)}}^{\top}  &  y_{t_{(1)}}^{\top}  &   ... &  y_{t_{(qn+mn+n-1)}}^{\top} 
\end{bmatrix}^{\top}  \in\mathbb R^{q(qn+mn+n)}, \label{Ytdef}
\end{align}
\end{subequations}
where the elements $\omega_{t_{(i)}}$ and $y_{t_{(i)}}$ are generated recursively using exponentially weighted filtering of current and past data, such that the following relation (similar to \eqref{ytrueobs}) holds.
\begin{align}
   y_{t_{(i)}}=\omega_{t_{(i)}}\theta+\eta_{t_{(i)}}, \label{yti_wti}
\end{align}
where $  \eta_{t_{(i)}}\in \mathcal N_{t_{(i)}}$ that are generated recursively from the disturbance bounds. The detailed recursive laws are given in Appendix \ref{app:ywNdefinitions}.

\subsection{Adaptation law}

At each $t$, the signals $\mathcal Y_t$ and $\mathcal W_t$ are available, which allows computation of a non-falsified set for the true quantity $\theta$ using \eqref{yti_wti}, \eqref{ch5ie:MatReg}, and consequently, updating of the uncertainty sets $\Psi$ and $\mathbb X_0$ using set membership identification \cite{boydset}. The non-falsified set is defined as
\begin{align}
    \Xi_{t}&\triangleq \Big{\{}(\bar p,\bar x_0) \;|\; y_{t_{(i)}}- \omega_{t_{(i)}}\begin{bmatrix}
       { \bar p}^\top {{\bar x}_0}^\top
    \end{bmatrix}^\top \in\mathcal N_{t_{(i)}}\nonumber
    \\& \forall (\bar x_0, \bar p, i)\in\mathbb R^n \times \mathbb R^{qn+mn} \times \mathbb I_0^{qn+mn+n-1}\Big{\}},
\label{nonfset}
\end{align}
where $\Xi_t\in\mathbb R^{qn+mn}\times \mathbb R^n$. The sets $\Psi$ and $\mathbb X_0$ are updated as
\begin{align}
   & \Pi_{t+1}\times \mathbb X_{0_{t+1}}\triangleq \{\Pi_{t}\times \mathbb X_{0_{t}}\}\cap \Xi_{t+1} \label{pix1}\\
   & \Psi_{t+1}\triangleq \{ \mathcal Z_1^{-1}(\bar p,f,q)\;|\;\forall \bar p\in\Pi_{t+1}, f \text{ in \eqref{obsF2}}\}\label{psi1}
\end{align}
with the initial sets
\begin{align}
    \Psi_0\triangleq \Psi,\; \Pi_0\triangleq \Pi,\;\mathbb X_{0_0}\triangleq \mathbb X_0.
\end{align}
The adaptation of the sets $\mathbb X_{0_t}$ and $\Psi_t$ may also lead to a change in the number of their vertices, denoted by $L_{x_t}$ and $L_{p_t}$, respectively, with 
\begin{align}
    L_{x_0}\triangleq L_x,\;\;L_{p_0}\triangleq L_p.
\end{align}

The reduction in the uncertainty sets for $p$ and $x_0$ is subsequently exploited in the design of a point-based update law, in which the parameter estimate is projected onto the updated sets. The adaptive law for the point estimate is the following projection modified normalized gradient descent-based scheme.
\begin{subequations}\label{adaptivelaw}
\begin{align}
&{\bar \theta_{t+1}\triangleq\hat \theta_{t}+\frac{\kappa \mathcal W_{t+1}^\top \left(\mathcal Y_{t+1}-\mathcal W_{t+1} \hat \theta_{t} \right)}{1+\mathrm{trace}(\mathcal W_{t+1}^\top \mathcal W_{t+1})} \;\;\;\forall t\in\mathbb I_0^\infty, \label{gd1}} \\
   &  \hat \theta_{t+1}=\begin{cases}\bar \theta_{t+1},\;\;\text{if }\bar p_{t+1}\in\Pi_{t+1} \text{ and }\bar x_{0_{t+1}}\in\mathbb X_{0_{t+1}}\\
     \begin{bmatrix}
        \hat p_{t+1}\\
       \hat x_{0_{t+1}}
    \end{bmatrix},\;\;\text{otherwise,}\end{cases}\label{proj2}
\end{align}
\end{subequations}
where $\kappa\in(0,2)$,
\begin{subequations}\label{forprojection}
\begin{align}
&\bar p_{t+1}\triangleq \bar \theta_{t+1}(1:qn+mn,1),\\
&\bar x_{0_{t+1}}\triangleq\bar \theta_{t+1}(qn+mn+1:qn+mn+n,1),\\
& \hat p_{t+1}\triangleq \arg\underset{\zeta\in\Pi_{t+1}}{\min}\|\bar p_{t+1}-\zeta\|_2,\\
&  \hat x_{0_{t+1}}\triangleq \arg\underset{\zeta\in\mathbb X_{0_{t+1}}}{\min}\|\bar x_{0_{t+1}}-\zeta\|_2.
\end{align}
\end{subequations}
The estimates $\hat A_{t+1}$ and $\hat B_{t+1}$ are obtained as
\begin{subequations}\label{ABfrom_projection}
\begin{align}
&  \hat \psi_{t+1}= \mathcal Z_1^{-1}(\hat p_{t+1},f,q),\\
& \hat A_{t+1}=\hat \psi_{t+1}(:,1:n),\\
&  \hat B_{t+1}=\hat \psi_{t+1}(:,n+1:n+m).
\end{align}
\end{subequations}
Further, since $\mathbb X_{0_t}$ is updated, the definition \eqref{X0ttildeset0} is improved to
\begin{align}\label{X0ttildeset}
    \widetilde{\mathbb X}_{0_t}\triangleq  \{x\;|\;x+\hat x_{0_t}\in \mathbb X_{0_t}\}.
\end{align}
The steps for implementing the observer are given in Algorithm \ref{alg:obs}. 

\begin{lemma}\label{lemma:nonfset}
If the true parameters and initial condition of \eqref{sys1x} belong to the initial uncertainty sets, i.e., $p\in\Pi_0$ (or $\psi\in\Psi_0$) and $x_0\in\mathbb X_{0_0}$, then they also belong to the update sets obtained using \eqref{nonfset}-\eqref{psi1}, i.e.,
\begin{align*}
    p\in\Pi_t,\;\;\psi\in\Psi_t, \;\;x_0\in\mathbb X_{0_t}\;\;\;\forall t\in\mathbb I_0^\infty,
\end{align*} where the updated sets are nested, i.e.,
\begin{align*}
    \Pi_t\subseteq \Pi_{t-1},\;\;\Psi_t\subseteq\Psi_{t-1},\;\;\mathbb X_{0_t}\subseteq\mathbb X_{0_{t-1}}\;\;\;\forall t\in\mathbb I_1^\infty.
\end{align*}
Consequently, the sets $\Pi_t$, $\Psi_t$, $\mathbb X_{0_t}$ and $\widetilde{\mathbb X}_{0_t}$ are non-empty $\forall t\in\mathbb I_0^\infty$.
\end{lemma}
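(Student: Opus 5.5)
The plan is an induction on $t$, built on one observation: the true pair $(p,x_0)$ is never falsified by the data. I would first show that $(p,x_0)\in\Xi_t$ for every $t$. By \eqref{yti_wti}, $y_{t_{(i)}}-\omega_{t_{(i)}}\theta=\eta_{t_{(i)}}$. By construction of the filtered signals in Appendix \ref{app:ywNdefinitions}, $\eta_{t_{(i)}}\in\mathcal N_{t_{(i)}}$ for all $i\in\mathbb I_0^{qn+mn+n-1}$, because these sets are generated recursively from the known bound $d_t\in\mathbb D$. Substituting $(\bar p,\bar x_0)=(p,x_0)$ into \eqref{nonfset} therefore satisfies every membership condition, so $\theta\in\Xi_t$ for all $t$. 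I expect this to be the main obstacle. It depends on the appendix recursion for $\mathcal N_{t_{(i)}}$ being a valid outer bound on the filtered disturbance term $\eta_{t_{(i)}}$, so I would check there that the same exponentially weighted filtering that defines $y_{t_{(i)}}$ and $\omega_{t_{(i)}}$ is applied, through Minkowski sums and linear maps, to $\mathbb D$ when forming $\mathcal N_{t_{(i)}}$.

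With this in hand, the induction runs as follows. The base case holds by hypothesis: $p\in\Pi_0$ and $x_0\in\mathbb X_{0_0}$. For the inductive step, suppose $(p,x_0)\in\Pi_t\times\mathbb X_{0_t}$. Since $(p,x_0)\in\Xi_{t+1}$, the pair lies in the intersection in \eqref{pix1}, which gives $p\in\Pi_{t+1}$ and $x_0\in\mathbb X_{0_{t+1}}$. For $\psi$, I would use the bijection $\mathcal Z_1$ between $\psi$ and $p$ for the fixed $f$ (Appendix \ref{app:z1z1inv}). Because $p=\mathcal Z_1(\psi,f,q)$ and $p\in\Pi_{t+1}$, \eqref{psi1} gives $\psi=\mathcal Z_1^{-1}(p,f,q)\in\Psi_{t+1}$. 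The case $t=0$ is $\Psi_0=\Psi$.

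Nestedness follows directly from \eqref{pix1}, since $\Pi_{t+1}\times\mathbb X_{0_{t+1}}\subseteq\Pi_t\times\mathbb X_{0_t}$. Projecting onto each factor gives $\Pi_{t+1}\subseteq\Pi_t$ and $\mathbb X_{0_{t+1}}\subseteq\mathbb X_{0_t}$, and because the image under $\mathcal Z_1^{-1}$ preserves inclusion, also $\Psi_{t+1}\subseteq\Psi_t$.

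Non-emptiness then follows from membership of the true quantities: $p\in\Pi_t$, $\psi\in\Psi_t$ and $x_0\in\mathbb X_{0_t}$. For $\widetilde{\mathbb X}_{0_t}$ in \eqref{X0ttildeset}, take $x=x_0-\hat x_{0_t}$. Then $x+\hat x_{0_t}=x_0\in\mathbb X_{0_t}$, so $\tilde x_{0_t}=x_0-\hat x_{0_t}\in\widetilde{\mathbb X}_{0_t}$, and this set is non-empty. This holds regardless of which branch of \eqref{proj2} produced $\hat x_{0_t}$.
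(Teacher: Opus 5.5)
Your proposal is correct and follows essentially the same route as the paper: show that the true pair is never falsified, i.e.\ $(p,x_0)\in\Xi_t$ because $\eta_{t_{(i)}}\in\mathcal N_{t_{(i)}}$, then induct through the intersection in \eqref{pix1} to obtain membership, nestedness, and non-emptiness. The only minor difference is the non-emptiness of $\widetilde{\mathbb X}_{0_t}$: the paper uses the projection $\hat x_{0_t}\in\mathbb X_{0_t}$, so that $0\in\widetilde{\mathbb X}_{0_t}$, whereas your witness $\tilde x_{0_t}=x_0-\hat x_{0_t}$ works independently of the projection step.
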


\begin{proof}
    Refer to Appendix \ref{app:nonfset}.
\end{proof}

For the convergence analysis, the minimal robust positive invariant (RPI) set generated as $\lim_{t\rightarrow \infty}\bigoplus_{i=0}^{t-1} F^t\mathbb D$ is required. To keep the computation tractable,\begin{align}\label{SetDRPI}
\mathbb D^{RPI} =\text{ Outer RPI approximation of } \lim_{t\rightarrow \infty}\bigoplus_{i=0}^{t-1} F^t\mathbb D
\end{align}
is used, which can be obtained using \cite{rakovic2005invariant}. The set $\mathbb D^{RPI}$ contains all possible values of the disturbance-related summation term in \eqref{xtrue}. Accordingly, the minimal RPI set containing $\eta_t$ is $C\mathbb D^{RPI}$.

\begin{theo}\label{theo:observer}
Suppose the input $u_t$ to the plant \eqref{sys1x} and the adaptive observer \eqref{adapobs1}, and the output $y_t$ in \eqref{sys1y} are uniformly bounded. Define an error $e_t\triangleq ({\mathcal Y_t- \mathcal W_t \hat \theta_{t-1}})/({1+\mathrm{trace}(\mathcal W^{\top}_{t} \mathcal W_{t})})$, where $\mathcal Y_t$, $\mathcal W_t$ are matrices formed with \eqref{Wtdef}-\eqref{yti_wti}, \eqref{ch5ie:MatReg} using the data available on system \eqref{sys1}. The update law in \eqref{adaptivelaw} with \eqref{forprojection}, \eqref{ABfrom_projection} guarantees
\begin{itemize}
\item[(1)] $e_t$, $ e_t\left(1+\mathrm{trace}(\mathcal W^{\top}_{t} \mathcal W_{t}) \right)^{\frac{1}{2}}$, $\hat\theta_t$ $\in\mathcal L_\infty$,
\item[(2)] $e_t$, $ e_t (1+\mathrm{trace}(\mathcal W^{\top}_{t} \mathcal W_{t}) )^{\frac{1}{2}}$, $\|\hat\theta_t-\hat\theta_{t-1}\|_2$ $\in \mathcal S({\bar \eta}^2)$,\footnote{A sequence vector $z_t$ is said to belong to $\mathcal{S}({\bar \eta}^2)$ if $\sum_{i=t}^{t+k}z^{\intercal}_iz_i\leq c_0{\bar \eta}^2 k+c_1$ $\forall \;t\in\mathbb{I}_1^\infty$, a given constant ${\bar \eta}^2$, and some $k\in\mathbb{I}_1^\infty$, where $c_0,\;c_1\geq0$ \cite[Theorem~4.11.2, footnote~6]{ioannou2006adaptive}.} where $\bar \eta$ is the upper bound of $\|\eta  (1+\mathrm{trace}(\mathcal W^{\top}_{t} \mathcal W_{t})^{-\frac{1}{2}}\|_2$, and $\eta\in C\mathbb D^{RPI}$, defined in \eqref{SetDRPI}.
        \item[(3)] the state estimation error $\tilde{x}_t =x_t-\hat x_t \in\mathcal{L}_\infty$.
    \end{itemize}
\end{theo}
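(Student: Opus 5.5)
The proof follows the standard robust normalized-gradient argument of \cite[Theorem~4.11.2]{ioannou2006adaptive}, adapted to the projection in \eqref{proj2}. Let $\tilde\theta_t\triangleq\hat\theta_t-\theta$, $m_t^2\triangleq 1+\mathrm{trace}(\mathcal W_t^\top\mathcal W_t)$, and let $\bar\eta_t$ be the stacked vector of $\eta_{t_{(i)}}$. From \eqref{yti_wti},
\[
\mathcal Y_{t+1}-\mathcal W_{t+1}\hat\theta_t=-\mathcal W_{t+1}\tilde\theta_t+\bar\eta_{t+1},
\qquad e_{t+1}=(-\mathcal W_{t+1}\tilde\theta_t+\bar\eta_{t+1})/m_{t+1}^2 .
\]
The candidate Lyapunov function is $V_t=\tilde\theta_t^\top\tilde\theta_t$.

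\textbf{Step 1 (effect of projection).} By Lemma~\ref{lemma:nonfset}, $p\in\Pi_{t+1}$ and $x_0\in\mathbb X_{0_{t+1}}$. These sets are convex, being intersections of convex sets and slabs defined by the polytopes $\mathcal N_{t_{(i)}}$. Euclidean projection onto a convex set containing $\theta$ is nonexpansive toward $\theta$, and it acts blockwise. Hence $\|\hat\theta_{t+1}-\theta\|_2\le\|\bar\theta_{t+1}-\theta\|_2$, so it suffices to analyse the unprojected step \eqref{gd1}. The same fact gives $\|\hat\theta_{t+1}-\hat\theta_t\|_2\le\|\bar\theta_{t+1}-\hat\theta_t\|_2$, because $\hat\theta_t\in\Pi_{t}\times\mathbb X_{0_t}$ and, by nestedness, this product set contains $\Pi_{t+1}\times\mathbb X_{0_{t+1}}$. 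This containment needs care: $\hat\theta_t$ lies in the earlier, larger set. So instead I would use $\|\hat\theta_{t+1}-\hat\theta_t\|\le\|\hat\theta_{t+1}-\theta\|+\|\theta-\hat\theta_t\|$ for boundedness. For square summability I would bound $\|\bar\theta_{t+1}-\hat\theta_t\|=\kappa\|\mathcal W_{t+1}^\top e_{t+1}\|$ and handle the projection correction through the Lyapunov decrease below.

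\textbf{Step 2 (Lyapunov difference).} Expand
\[
\|\bar\theta_{t+1}-\theta\|^2-V_t=2\kappa\tilde\theta_t^\top\mathcal W_{t+1}^\top e_{t+1}+\kappa^2\|\mathcal W_{t+1}^\top e_{t+1}\|^2 .
\]
Substitute $-\mathcal W_{t+1}\tilde\theta_t=m_{t+1}^2e_{t+1}-\bar\eta_{t+1}$ and use $\|\mathcal W_{t+1}\|^2\le m_{t+1}^2-1$. This gives
\[
V_{t+1}-V_t\le-\kappa(2-\kappa)\,m_{t+1}^2\|e_{t+1}\|^2+2\kappa\,e_{t+1}^\top\bar\eta_{t+1}+\delta_t .
\]
Here $\delta_t\le 0$ collects the projection gain, and I would keep it to absorb the $\|\hat\theta_{t+1}-\bar\theta_{t+1}\|^2$ term using the projection inequality $\|\bar\theta-\theta\|^2\ge\|\hat\theta-\theta\|^2+\|\bar\theta-\hat\theta\|^2$. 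Applying Young's inequality to the cross term yields
\[
V_{t+1}-V_t\le-c\,m_{t+1}^2\|e_{t+1}\|^2+c'\,\|\bar\eta_{t+1}/m_{t+1}\|^2 ,
\]
with $c>0$ because $\kappa\in(0,2)$. The last term is bounded by $\bar\eta^2$, using $\eta\in C\mathbb D^{RPI}$ and the recursive bounds of Appendix~\ref{app:ywNdefinitions}. Summing from $t$ to $t+k$ and using $V\ge0$ gives $e\,m$, and hence $e$ since $m\ge1$, in $\mathcal S(\bar\eta^2)$. It also gives $\|\bar\theta_{t+1}-\hat\theta_t\|^2\le\kappa^2\|\mathcal W\|^2\|e\|^2\le\kappa^2 m^2\|e\|^2\in\mathcal S(\bar\eta^2)$. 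Combined with the projection term, this yields $\|\hat\theta_t-\hat\theta_{t-1}\|\in\mathcal S(\bar\eta^2)$.

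\textbf{Step 3 (boundedness).} Boundedness of $\hat\theta_t$ is immediate from the projection: $\hat\theta_t\in\Pi_t\times\mathbb X_{0_t}\subseteq\Pi_0\times\mathbb X_0$, which is bounded. Then $e_tm_t=(-\mathcal W_t\tilde\theta_{t-1}+\bar\eta_t)/m_t$ is bounded, since $\|\mathcal W_t\|/m_t\le1$ and $\tilde\theta$ and $\bar\eta/m$ are bounded. This proves item (1). For item (3), write
\[
\tilde x_t=M_t(p-\hat p_t)+F^t(x_0-\hat x_{0_t})+\sum_{k=0}^{t-1}F^{t-1-k}d_k .
\]
$M_t$ is bounded because $F$ is Schur and $u_t,y_t$ are uniformly bounded in \eqref{FilterM}. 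The parameter errors are bounded by Step 3, $F^t$ is bounded, and the disturbance sum lies in $\mathbb D^{RPI}$. Hence $\tilde x_t\in\mathcal L_\infty$.

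The main obstacle is Step 1/2: rigorously handling the projection, which is applied blockwise onto time-varying sets, inside the $\mathcal S(\bar\eta^2)$ bound for $\|\hat\theta_t-\hat\theta_{t-1}\|$. Convexity of $\Pi_{t+1}$ (and its image under $\mathcal Z_1$) must also be checked. The key inequality I plan to use is the obtuse-angle property of projection onto a convex set containing $\theta$.
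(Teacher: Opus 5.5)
Your proposal is correct and is essentially the paper's argument. The paper simply defers to the standard normalized-gradient-with-projection lemma from the authors' earlier work (in the Ioannou--Sun style), adding only that $\mathcal W_t$ is bounded and that $\bar\eta$ is finite because each $\mathcal N_{t_{(i)}}$ stays inside $C\mathbb D^{RPI}$ for $\sigma\in(0,1)$; you have written that argument out explicitly. Handling the projection through $\|\bar\theta_{t+1}-\theta\|_2^2\ge\|\hat\theta_{t+1}-\theta\|_2^2+\|\bar\theta_{t+1}-\hat\theta_{t+1}\|_2^2$, rather than nonexpansiveness toward $\hat\theta_t$ (which indeed fails), is the right fix, and the $t$-independent constant $c_1$ in the $\mathcal S(\bar\eta^2)$ bound relies on $\sup_t V_t<\infty$, which your Step 3 supplies.
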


\begin{proof}
The proof directly follows from \cite[Lemma~1]{anchita2025}. Note that $\mathcal W_t$ is formed using $C$, $F$ which are constants, and $M_t$ whose dynamics is BIBO stable; this implies $\mathcal W_t$ has a finite upper bound. Similarly, since $\sigma\in(0,1)$ (refer to Appendix \ref{app:ywNdefinitions}), the worst- case set for $\eta_{t_{(i)}}$ in \eqref{yti_wti} is $C\mathbb D^{RPI}$, implying that $\bar\eta$ is a finite known constant.
\end{proof}

\section{Output Feedback Adaptive Tube MPC}\label{sec:COCPcomponents}
Building on the adaptive observer framework developed earlier and motivated by \cite{anchita2025, anch}, the COCP is reformulated in terms of the quantities available at each time step: the state estimate (observer state) $\hat x_t$, its initial condition $\hat x_{0_t}$, the parameter point estimates $\hat A_t$, $\hat B_t$, and the uncertainty sets $\Psi_t$, $\mathbb X_{0_t}$. All COCP components are updated as these estimates evolve.  

Analysis proceeds in two stages. First, the point estimate is held fixed, reducing the problem to a standard observer-based tube MPC formulation. Recursive feasibility and stability then follow from classical arguments \cite{kogel2017robust,mayne2006robust,mayne2009robust,rakovic2012homothetic}. Crucially, contraction of uncertainty sets $\Psi_t$ and $\mathbb X_{0_t}$ only relaxes the constraint tightening and enlarges the feasible region; it therefore cannot invalidate a previously feasible solution. Stability and feasibility guarantees are thus preserved even as the set shrinks.

In the fully adaptive setting, however, the point estimate and consequently, the prediction model and the feasible region, may change at every sampling instant. The COCP is therefore time-varying, and the standard argument that relies on shifting the previous optimal solution forward by one step no longer applies. Recursive feasibility and stability must be established separately through a suitable backup solution.



The COCP components are first constructed for a fixed-point estimate and then extended to the fully adaptive case. Two sources of error arise in this setting: state estimation error and prediction error, as discussed in the following subsections.

\subsection{Constraint tightening for state estimation error}\label{sec:xtilde}

Feasibility of the COCP ensures constraint satisfaction for the observer state but does not directly guarantee constraint satisfaction for the true plant state. This gap is addressed through constraint tightening based on the state estimation error dynamics. From \eqref{xtrue} and \eqref{xhat},
\begin{align}
    \tilde x_t=M_t\tilde p_t+F^t\tilde x_{0_{t}}+\sum_{k=0}^{t-1}F^{t-1-k}d_k,
\end{align}
where $\tilde p_t\triangleq p-\hat p_t$ and $\tilde x_{0_{t}}\triangleq x_0-\hat x_{0_{t}}$, are the parameter estimation error and initial state estimation error, respectively.

The COCP at time $t$ uses the point estimates available at that time, which are fixed over the prediction horizon. The predicted observer state and corresponding estimation error are denoted by $\hat x_{t,i}$ and $\tilde x_{t,i}$, respectively, where $\hat x_{t,i}$ represents the $i$-step-ahead predicted observer state and $\tilde x_{t,i}$ the corresponding estimation error\footnote{The notation $(\cdot)_{t,i}$ corresponds to the case of fixed point estimates, and is used to avoid ambiguity with the actual signal $(\cdot)_{t+i}$.}. With fixed point estimates, 
\begin{align}
    \hat x_{t,i}=M_{t+i}\hat p_t+F^{t+i}\hat x_{0_t}\;\;\;(\text{from \eqref{xhat}}),
\end{align}
which when subtracted from \eqref{xtrue} at time $t+1$ yields
\begin{align}
   &\; \tilde x_{t,1}= x_{t+1}-\hat x_{t,1}
    =\left(FM_{t}+\begin{bmatrix}
       Y_t & U_t
   \end{bmatrix}\right)\tilde p_t \nonumber\\
  & +F^{t+1}\tilde x_{0_t} +\sum_{k=0}^{t-1}F^{t-k}d_k+d_t \nonumber\\
 \Rightarrow &\;\tilde x_{t,i+1}=F\tilde x_{t,i}+\begin{bmatrix}
Y_{t+i} & U_{t+i}
\end{bmatrix}\tilde p_t +d_{t+i};\;\tilde x_{t,0}=\tilde x_t\nonumber\\
&\hspace{3cm}\forall t\in\mathbb I_0^\infty,\;i\in\mathbb I_0^{N-1}.
\label{state_est_err_dyn}
\end{align}
Propagating the dynamics in \eqref{state_est_err_dyn} yields the sets $\widetilde{\mathbb X}_{t,i}\ni \tilde x_{t,i}$:
\begin{align}
&\widetilde{\mathbb X}_{t,i}= F^{t+i} \widetilde{\mathbb X}_{0_t} \oplus \bigoplus_{k=0}^{t+i-1}F^{t+i-1-k}\left( \mathfrak{D}_{yu_{t}} \oplus \mathbb D \right)\label{TildeXti}\\
   &\mathfrak D_{yu_{t}}\triangleq \left\{ \begin{bmatrix}
       I_n\otimes y^\top & I_n\otimes u^\top
   \end{bmatrix}\tilde p \; \middle|\; y\in C\mathbb X,\right. \nonumber\\
   &\hspace{4cm}\left.\vphantom{\begin{bmatrix}
       I_n\otimes y^\top
   \end{bmatrix}}\;u\in\mathbb U,\;\tilde p \in \Pi_t\ominus \hat p_t\right\}.\label{Dyut}
\end{align}
Constraint sets for the state estimates are given by
\begin{align}
    \hat x_{t,i}\in \widehat{\mathbb X}_{t,i}\triangleq \mathbb X\ominus \widetilde{\mathbb X}_{t,i}.\label{HatXti}
\end{align}
For the COCP to be feasible, the tightened sets $\widehat {\mathbb X}_{t,i}$ must be non-empty. To characterize a conservative worst-case bound on the estimation error sets $\widetilde{\mathbb X}_{t,i}$ over the prediction horizon, define
\begin{align}
    \bar{\mathcal X}_t\triangleq F^t \widetilde{\mathbb X}_{0_t}\oplus \mathfrak D^{RPI}_{yu_t}\oplus \mathbb D^{RPI},
\end{align}
where $\bar{\mathcal X}_t\supseteq \widetilde{\mathbb X}_{t,i}$, and
\begin{align}\label{SetDyuRPI}
    \mathfrak D_{yu_t}^{RPI}=\text{Outer RPI approximation of}\nonumber\\
\lim_{t\rightarrow\infty}\bigoplus_{k=0}^{t-1} F^{t-1-k}\mathfrak {D}_{yu_t}
\end{align}
is obtained using \cite{rakovic2005invariant}. Consequently, if $\mathbb X\ominus \bar{\mathcal X}_t$ is non-empty, then the tightened sets $\widehat{\mathbb X}_{t,i}$ also remain non-empty. Accordingly, the following standard assumption is imposed \cite{kouvaritakis2016model,Rawlings2017MPC,CHISCI20011019,mayne2009robust,kogel2017robust,mayne2006robust,langson2004robust}.
\begin{assu}\label{assum:feasibility}
    The set $\mathbb X\ominus \bar{\mathcal X}_0$ is non-empty.
\end{assu}
This assumption ensures that at the initial time, even after subtracting the worst-case uncertainty effects, the set $\mathbb X$ retains admissible states, i.e., the tightened set remains non-empty to allow running a COCP.

\begin{remark}
If Assumption \ref{assum:feasibility} holds and the point estimate $\hat\theta_t$ remains fixed, then the sets $\mathbb X\ominus \bar{\mathcal X}_t$ remain non-empty and expand monotonically with time, i.e., $\mathbb X\ominus \bar{\mathcal X}_t \supseteq \mathbb X\ominus \bar{\mathcal X}_{t-1}$ for all $t$ (by Assumption \ref{Assum:1}). Further, for implementation purposes, Assumption \ref{assum:feasibility} is verified for a fixed point estimate rather than over all elements of $\Psi_0$ which reduces conservatism. Also, $\bar{\mathcal X}_0$ is a conservative bound; since the first term in $\widetilde{\mathbb X}_{t,i}$ decreases with time and the summation terms converge, tighter bounds on $\widetilde{\mathbb X}_{t,i}$ can be obtained at each step.
\end{remark}

\subsection{Constraint tightening for prediction error}\label{sec:predError}

The prediction model in the COCP propagates the observer state using \eqref{xhat} with fixed point estimates, given by 
\begin{align}
   \hat x_{t,i+1}=&M_{t+i+1}\hat p_t+F^{t+i+1}\hat x_{0_t}\nonumber\\
   =&\left(FM_{t+i}+ \begin{bmatrix}
       Y_{t+i}&U_{t+i}
   \end{bmatrix}\right)\hat p_t+ F^{t+i+1}\hat x_{0_t}\nonumber\\
   =& F \hat x_{t,i}+  \begin{bmatrix}
       Y_{t+i}&U_{t+i}
   \end{bmatrix}\hat p_t \nonumber\\
   =& F \hat x_{t,i} + (\hat A_t -F) x_{t+i}+\hat B_t u_{t+i},\nonumber
\end{align}
which depends on future values of the true state $x_{t+i}$ that are not available. To eliminate this dependence, the dynamics is rewritten in terms of available quantities as
\begin{align}
    \label{adobsdyn}
  \hat x_{t,i+1} = \hat A_t \hat x_{t,i}+\hat B_t u_{t+i} + \underbrace{(\hat A_t - F)\tilde x_{t,i}}_{\varepsilon_{t,i}},
\end{align}
which is used for state prediction within the COCP. The term $\varepsilon_{t,i}$ represents the mismatch between the prediction model and the observer dynamics, and is referred to as the \emph{prediction error}, where
\begin{align}
    \varepsilon_{t,i}\in\mathcal E_{t,i}\triangleq (\hat A_t-F) \widetilde{\mathbb X}_{t,i}.\label{mathcalEset}
\end{align}
This error is treated as a lumped disturbance within a robust homothetic tube-based MPC framework.


   Unlike \cite{anch,anchita2025,dhar2021indirect}, where the prediction error bounds grow with the horizon length $N$ to account for the unavailability of future parameter estimates, in the present formulation, the point estimates are held fixed over the prediction horizon. As a result, $\mathcal E_{t,i}$ is independent of $N$, which reduces conservatism and removes the restriction on the horizon length.

\subsection{Terminal set construction}\label{sec:TS}

The terminal set is chosen as the maximal admissible RPI set for the prediction dynamics in \eqref{adobsdyn} with respect to the prediction error $\mathcal E_{t,N-1}$ for $t\in\mathbb I_0^\infty$. This maximizes the terminal region and reduces the control effort required to steer the observer state into it, potentially allowing shorter horizons and fewer decision variables. The worst-case prediction error at the terminal step is
\begin{align}\label{Bar_E_allN}
    \bar{ \mathcal E}_t\triangleq \left(\hat A_t - F\right)\left(F^{t+N-1}\widetilde{\mathbb X}_{0_t} \oplus \mathfrak{D}_{yu_t}^{RPI} \oplus \mathbb D^{RPI} \right),
\end{align}
and the tightened terminal constraint set is
\begin{align}
   {\mathcal X}_t\triangleq {\mathbb X \ominus \left( F^{t+N} \widetilde{\mathbb X}_{0_t} \oplus \mathfrak D_{yu_t}^{RPI}\oplus \mathbb D^{RPI} \right)}.
\end{align}

To ensure feasibility and stability, the terminal set $\widehat{\mathbb X}_{TS_t}$ at initialization ($t=0$) is constructed under the following standard assumption \cite{kouvaritakis2016model,Rawlings2017MPC,CHISCI20011019,langson2004robust,fleming2015,buj2021,jafari2017adaptive,kohlerrampc, kogelblending, tranosrusso,wang2025tube,lorenzen2019robust,dhar2021indirect,anchita2025,anch,kogel2017robust,mayne2006robust,mayne2009robust,rakovic2012homothetic,lu2023robust}.
\begin{assu}\label{assu:XTS0}
For any $\begin{bmatrix}\hat A_0 & \hat B_0\end{bmatrix}\in\Psi_0$, and the pair $(P_0,\;K_0)$ computed using \eqref{originaldare} with $Q,\;R\succ 0$, there exists a non-empty terminal set $\widehat{\mathbb X}_{TS_0}$ containing the origin in its interior such that
\begin{gather}\label{ch7:marpi1}
  \begin{aligned}
   &\widehat{\mathbb X}_{TS_0}\subseteq \mathcal X_0,\;\;K_0 \widehat{\mathbb X}_{TS_0}\subseteq \mathbb U,\\
   &(\hat A_0 +\hat B_0 K_0) \widehat{\mathbb X}_{TS_0} \oplus\bar{\mathcal E_0}\subseteq \widehat{\mathbb X}_{TS_0}.
  \end{aligned}
\end{gather}
\end{assu}

When the point estimate $\hat \theta_t$ is updated, the terminal ingredients $P_t$ and $K_t$ and the terminal set must be recomputed. Such updates may compromise stability and feasibility\footnote{Recursive feasibility may be affected for small prediction horizons or empty tightened constraints; this is discussed later in Sec.~\ref{sec:recfeas}.}. While Assumption~\ref{assu:XTS0} can be verified offline, additional conditions are required to ensure safe online switching to new point estimates. To this end, the following criterion is checked $\forall t\in\mathbb I_1^\infty$.

\begin{crit}\label{Crdare2}
Given $\hat A_t$, $\hat B_t$, $(P_{t-1}, K_{t-1})$, $Q,\;R\succ 0$, and sets $\mathcal X_t$, $\mathbb U$, $\bar {\mathcal E}_t$, select $(P_t,K_t)$ with $P_t\succ 0$ such that
\begin{subequations}
\label{subeqnTS}
\begin{align}
\text{(a)}\;\;& P_t-(\hat A_t+\hat B_t K_t)^{\top} P_t (\hat A_t+\hat B_t K_t)\nonumber\\
&\quad \quad -Q-K_t^{\top} R K_t \succeq 0, \label{ch7:heura}\\
\text{(b)}\;\;& P_{t-1}-(\hat A_t+\hat B_t K_t)^{\top} P_t (\hat A_t+\hat B_t K_t) \nonumber\\
&\quad \quad -Q-K_{t-1}^{\top} R K_{t-1} \succeq 0, \label{ch7:dare2cri}\\
\text{(c)}\;\;&\left.\begin{aligned}
    &\widehat{\mathbb X}_{TS_t}\subseteq \mathcal X_t,\;\;K_t \widehat{\mathbb X}_{TS_t}\subseteq\mathbb U,\\
&(\hat A_t+\hat B_t K_t) \widehat{\mathbb X}_{TS_t}\oplus \bar{\mathcal E}_t\subseteq \widehat{\mathbb X}_{TS_t}
\end{aligned} \right\}\label{2ndterminala}
\end{align}
\end{subequations}
where $\widehat{\mathbb X}_{TS_t}$ is non-empty and contains the origin in its interior.
\end{crit}

Conditions \eqref{ch7:heura} and \eqref{2ndterminala} are standard terminal set requirements and can be verified offline. Condition~\eqref{ch7:dare2cri} is the key adaptive ingredient---it enforces compatibility between consecutive point estimate updates by ensuring the Lyapunov function remains non-increasing along closed-loop trajectories despite changes in the prediction model, thereby maintaining stability.

Condition~\eqref{ch7:dare2cri} is less restrictive than the quadratic stabilizability condition \cite{khargonekar1990robust}, which requires a common $(P,K)$ for all $\begin{bmatrix} \hat A & \hat B \end{bmatrix}\in \Psi_0$; here, only consistency between successive estimates is enforced, avoiding the need for a single Lyapunov function over the entire uncertainty set.

If Criterion~\ref{Crdare2} cannot be satisfied, the point estimate is held fixed while the uncertainty sets are updated. This preserves both feasibility and stability, while still allowing reduction of uncertainty as new data becomes available.
\subsection{Tube parameterization}

At each time $t$, the COCP is solved using a homothetic tube-based approach \cite{rakovic2012homothetic,langson2004robust}. Two tubes are constructed: one for the observer state and the other for the control input. The geometry of the state tube-sections is defined by a polytope $\mathbb G_t$, designed based on the following standard assumption \cite{rakovic2012homothetic,lorenzen2019robust,anchita2025,anch,dhar2021indirect}.

\begin{assu}\label{assum:Gt}
The set $\mathbb G_t\triangleq \mathrm{\mathbf{co}}\left(\left\{ \mathfrak g_t^{[j]}\right\}_{j=1:H_t}\right)\subseteq \mathbb R^n$, with $H_t>0$ known vertices, is a polytope containing the origin in its interior and satisfying
\begin{subequations}
    \begin{align}
   & (\hat A_t + \hat B_t K_t)\mathbb G_t \oplus \hat{\mathcal E}_t  \subseteq \mathbb G_t,\\
& \text{where } \;  \hat{\mathcal E}_t\triangleq (\hat A_t-F)\bar{\mathcal X}_t.\label{Hat_E_allN}
\end{align}
\end{subequations}

\end{assu}  

For a given $t$ and prediction horizon $N$, $F^t \widetilde{\mathbb X}_{0_t} \supseteq F^{t+i}\widetilde{\mathbb X}_{0_t}$ for all $i\in\mathbb I_0^{N}$ (by Assumption~\ref{Assum:1}), while $\mathfrak D_{yu_t}^{RPI}\oplus \mathbb D^{RPI}$ captures all possible values of the summation terms in \eqref{TildeXti}. These imply that $\mathbb G_t$ is RPI for the prediction dynamics in \eqref{adobsdyn} with respect to all admissible values of prediction error $\varepsilon_{t,i}$. In practice, $\mathbb G_t$ is chosen as the minimal RPI set, computed using \cite{rakovic2005invariant}, to capture the worst-case prediction error. 

The observer state and control input tubes are defined as
\begin{subequations}\label{tubes_sc}
\begin{align}
& \mathcal T^{\hat x}_t\triangleq \left\{ \mathbb T^{\hat x}_{i|t} \right\}_{i=0:N},\quad \mathcal T^{u}_t\triangleq \left\{ \mathbb T^u_{i|t} \right\}_{i=0:N-1},\\
& \mathbb T^{\hat x}_{i|t} = \mathrm{\mathbf{co}}\left( \left\{\mathfrak s_{i|t}^{[j]} \right\}_{j=1:H_t}\right) \triangleq \alpha_{i|t}\oplus \beta_{i|t}\mathbb G_t,\\
& \qquad\Rightarrow \mathfrak s_{i|t}^{[j]}\triangleq \alpha_{i|t}+\beta_{i|t}\mathfrak g_t^{[j]},\\
& \mathbb T^{u}_{i|t} = \left\{\mathfrak u_{i|t}^{[j]} \right\}_{j=1:H_t}.
\end{align}
\end{subequations}
For any point $s\in\mathbb T^{\hat x}_{i|t}$, expressed as $s=\sum_{j=1}^{H_t} \tau^{[j]}_{(i,t)}\,\mathfrak{s}_{i|t}^{[j]}$, where each $\tau^{[j]}_{(i,t)}\in [0,1]$ and $\sum_{j=1}^{H_t} \tau_{(i,t)}^{[j]}=1$,
the control input is generated by the same convex combination of control vertices as follows:
\begin{align}\label{ConvU}
u= \mathcal Z_2(s,\mathbb T^{\hat x}_{i|t},\mathbb T^u_{i|t})
\triangleq\sum_{j=1}^{H_t}\tau^{[j]}_{(i,t)}\,\mathfrak{u}_{i|t}^{[j]}.
\end{align}

\subsection{Reformulated COCP}

Let $\mu_t\triangleq \left\{ \{(\alpha_{i|t}, \beta_{i|t})\}_{i=0:N},\; \left\{\mathfrak{u}^{[j]}_{i|t} \right\}_{i=0:N-1,\;j=1:{H_t}} \right\}$ denote the decision variable. The COCP is given by 
\begin{subequations}\label{MPC2}
\begin{align}
&\mathbb{P}_t:\;\;\min_{\mu_t}  \sum_{j=1}^{H_t} \left(\sum_{i=0}^{N-1} \left(\| {\mathfrak{s}}_{i|t}^{[j]}\|^2_Q + \| {\mathfrak{u}}_{i|t}^{[j]}\|^2_R \right)+\| {\mathfrak{s}}_{N|t}^{[j]}\|^2_{P_t} \right) \\
&\text{subject to } \; \eqref{subeqnTS}-\eqref{tubes_sc}, \nonumber\\
&\beta_{i|t}\geq 0\;\;\forall i\in\mathbb{I}_{1}^{N},  \label{cons1m}\\
& \hat x_t\in\mathbb T^{\hat x}_{0|t},  \label{cons2m} \\
& \mathbb{T}^{\hat x}_{i|t}\subseteq \widehat{\mathbb{X}}_{t,i},\;\;\mathbb{T}^u_{i|t}\subseteq\mathbb{U}\;\;\forall i\in\mathbb{I}_{0}^{N-1}, \label{cons3m} \\ 
& \mathbb{T}^{\hat x}_{N|t}\subseteq \widehat{\mathbb{X}}_{TS_t}, \label{cons4m}\\ 
&\hat{A}_t {\mathfrak{s}}_{i|t}^{[j]}+\hat{B}_t {\mathfrak{u}}_{i|t}^{[j]}\in \mathbb{T}^{\hat x}_{i+1|t}\ominus \mathcal E_{t,i} \;\;\forall (i,j)\in\mathbb{I}_{0}^{N-1}\times\mathbb{I}_1^{H_t}.  \label{cons6m}
\end{align}
\end{subequations}

The input applied to the plant \eqref{sys1x} and observer \eqref{adapobs1} is obtained from the optimal solution using \eqref{ConvU}, i.e.,
\[
u_t^*= \mathcal Z_2(\hat x_t,\mathbb T^{\hat x^*}_{0|t},\mathbb T^{u^*}_{0|t}).
\]

At time $t+1$, a new COCP is constructed using the updated sets $\Psi_{t+1}$, $\mathbb X_{0_{t+1}}$ and estimates $\hat A_{t+1}, \hat B_{t+1}$. This transition requires additional conditions to preserve stability and recursive feasibility. As subsequently shown in Theorem \ref{thm:stab}, stability is ensured if condition \eqref{ch7:dare2cri} holds. However, even when Criterion \ref{Crdare2} is satisfied, infeasibility of \eqref{MPC2} may arise due to a short prediction horizon or empty tightened sets $\widehat{\mathbb X}_{t+1,i}$. To address this, a fallback mechanism is employed by reverting to the previous estimates:
\begin{gather}\label{setup_backup}
\begin{aligned}
&\hat x_{0_{t+1}}\leftarrow\hat x_{0_{t}},\;\;\hat A_{t+1}\leftarrow\hat A_{t}, \;\;\hat B_{t+1}\leftarrow\hat B_{t},\;\;P_{t+1}\leftarrow P_{t},\\
&K_{t+1}\leftarrow K_{t},\;\;\Psi_{t+1}\leftarrow \mathrm{\mathbf{co}}\left(\Psi_{t+1}, \left\{ \begin{bmatrix}
        \hat A_{t} & \hat B_{t}
    \end{bmatrix}\right\}\right),\\
& \Pi_{t+1}\leftarrow \mathrm{\mathbf{co}}\left(\Pi_{t+1}, \left\{ \hat p_{t}\right\}\right),\;\;
\mathbb X_{0_{t+1}}\leftarrow \mathrm{\mathbf{co}}\left(\mathbb X_{0_{t+1}}, \left\{ \hat x_{0_{t}}\right\}\right).
\end{aligned}
\end{gather}
The observer state is then recomputed using \eqref{xhat}, and the COCP components are reconstructed as described in Sec.~\ref{sec:COCPcomponents}, after which the COCP is solved at time $t+1$.

Any update to $\Psi_t$, $\mathbb X_{0_t}$, or the estimates $\hat A_t$, $\hat B_t$, $\hat x_t$, $\hat x_{0_t}$ modifies the geometry of the sets involved in the COCP, including the polytope $\mathbb G_t$, yielding an \textit{adaptive tube}, enabling improved characterization of uncertainty in the propagation of both the observer state and the true plant state.
\begin{figure}[t!]
 \vspace{0.23cm} \centering
     {{\includegraphics[scale=0.13]{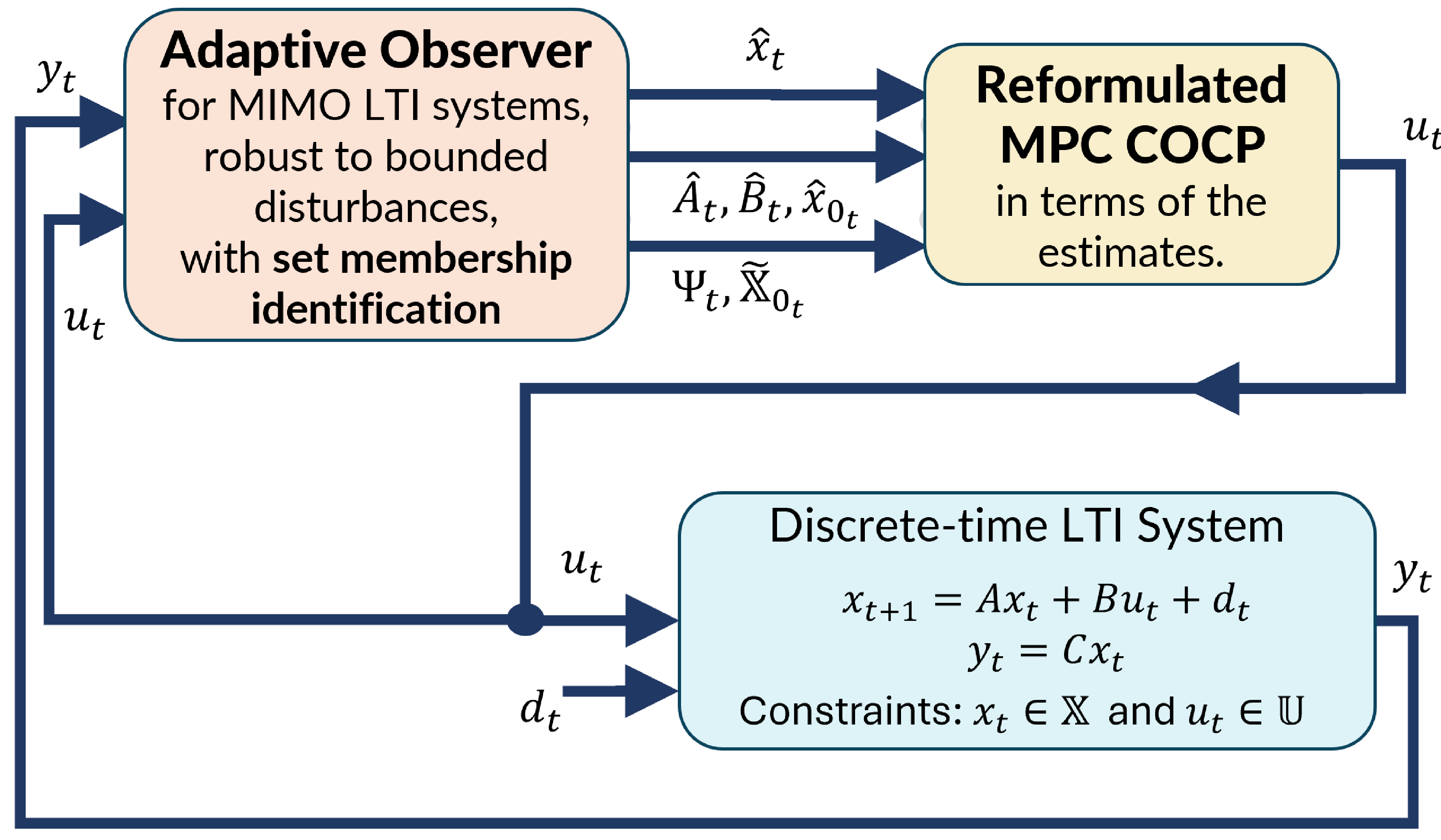}}}
     \caption{Schematic diagram for the proposed adaptive tube MPC. }
      \label{ch8:block_diag_RAOFMPC}
\end{figure} 
The implementation steps are summarized in Algorithm \ref{ch7:algoatube}, and the proposed scheme is illustrated in Fig.~\ref{ch8:block_diag_RAOFMPC}.

\begin{algorithm}[ht!]
\caption{Output Feedback MPC with Adaptive Tubes}\label{ch7:algoatube}
\begin{algorithmic}[1]
 \REQUIRE $\mathbb{X}$, $\mathbb{U}$, $\mathbb{D}$, $\psi^{[i]}$ $\forall i\in\mathbb{I}_1^{L_{p_0}}$ and $\hat \psi_0$ with $P_0$, $K_0$ satisfying \eqref{originaldare} for $\psi_0$, $F$, the $L_{x_0}$ vertices of $\mathbb{X}_0$ and $\hat x_{0_0}$ such that Assumption \ref{Assum:1} holds, $N$, $Q$, $R$, $\kappa$, $\sigma$.
 \ENSURE The optimal decision variable $\mu_t^*$ $\;\forall t\in\mathbb{I}_0^\infty$.\\
 \STATE Initialize the sets $\Psi_0=\Psi$, $\mathbb X_{0_0}=\mathbb X_0$, a constant $c_{backup}=0$ and time $t=0$.
\STATE Compute $\mathbb D^{RPI}$ using \eqref{SetDRPI} and \cite{rakovic2005invariant}.
\STATE Measure $y_0$.
\WHILE{$t\geq 0$} 
\STATE Compute the following using the references given in parentheses: $\Pi_t$ [\eqref{PiSetdefn}], $\hat \theta_t$ [\eqref{yhat}], $\widetilde{\mathbb X}_{0_t}$ [\eqref{X0ttildeset}], $\widetilde{\mathbb X}_{t,i}$ [\eqref{TildeXti}] and $\widehat {\mathbb X}_{t,i}$ [\eqref{HatXti}] $\forall i\in\mathbb I_0^N$, $\mathfrak D_{yu_t}$ [\eqref{Dyut}], $\mathfrak D_{yu_t}^{RPI}$ [\eqref{SetDyuRPI}], $\mathcal E_{t,i}$ [\eqref{mathcalEset}], $\bar{\mathcal E}_t$ [\eqref{Bar_E_allN}], $\hat{\mathcal E_t}$ [\eqref{Hat_E_allN}], $\mathbb G_t$ [Assumption \ref{assum:Gt} using \cite{rakovic2005invariant}].\label{off1step}\footnotemark
\IF{$t==0$}
 \STATE Compute $\widehat{\mathbb X}_{TS_t}$ satisfying \eqref{ch7:marpi1} using \cite{dey2024computation}.\label{off2step}
 \ELSE
 \STATE Check if $\exists$ $P_{t}$, $K_{t}$ and $\widehat{\mathbb X}_{TS_{t}}$ (computed using \cite{dey2024computation}) satisfying Criterion \ref{Crdare2}.
  \IF{Criterion \ref{Crdare2} does not hold}
     \STATE Set $c_{backup}\leftarrow1$ and $t\leftarrow t-1$, and go to Step \ref{ch7:stepjump}.
  \ENDIF
\ENDIF

\STATE Run the COCP $\mathbb P_t$ in \eqref{MPC2}.
\IF{\eqref{MPC2} is feasible, i.e., $\mu_t^*\neq\emptyset$}
 \STATE Set $c_{backup}\leftarrow0$.
 \ELSE
  \IF{$t==0$}
    \STATE Print `Initially infeasible setup', and exit the algorithm.
    \ELSE
    \STATE Set $c_{backup}\leftarrow1$ and $t\leftarrow t-1$.
    \ENDIF
\ENDIF
\IF{$c_{backup}==0$} 
  \STATE Apply $u_t^*$ computed using \eqref{ConvU} to the true system and measure $y_{t+1}$.
  \STATE Apply $u_t^*$, $y_t$ and $y_{t+1}$ to the adaptive observer.
  \STATE Obtain $\Psi_{t+1}$, $\mathbb X_{0_{t+1}}$, $\hat A_{t+1}$, $\hat B_{t+1}$, $\hat x_{0_{t+1}}$ and $\hat x_{t+1}$ using Algorithm \ref{alg:obs}. 
\ENDIF
\IF{$c_{backup}==1$} \label{ch7:stepjump}
\STATE Obtain the point estimates and uncertainty sets using \eqref{setup_backup}. 
\ENDIF
\STATE Update $t\leftarrow t+1$.
\ENDWHILE
\end{algorithmic}
\end{algorithm}

\begin{algorithm}[ht!]
\caption{Adaptive Observer}
\label{alg:obs}
\begin{algorithmic}[1]
\REQUIRE $t$, $F$, $C$, $\hat \psi_{t}$, $\hat x_{0_{t}}$, $u_t$, $y_t$, $y_{t+1}$, $q$, $n$, $m$.
\ENSURE $\Psi_{t+1}$, $\mathbb X_{0_{t+1}}$, $\hat A_{t+1}$, $\hat B_{t+1}$, $\hat x_{0_{t+1}}$, $\hat x_{t+1}$.\\
\STATE Compute $M_{t+1}$ using \eqref{FilterM}, $\omega_{t+1}$ and $\hat \theta_{t+1}$ using \eqref{yhat}, and $\mathcal Y_{t+1}$, $\mathcal W_{t+1}$, $\mathcal N_{{t+1}_{(i)}}$ $\forall i\in\mathbb I_0^{qn+mn+n-1}$ using \eqref{Wtdef}-\eqref{yti_wti}, \eqref{ch5ie:MatReg}.
\STATE Compute $\Xi_{t+1}$ using \eqref{nonfset}.
\STATE Update $\Pi_{t+1}$, $\mathbb X_{0_{t+1}}$ and $\Psi_{t+1}$ with \eqref{pix1} and \eqref{psi1}.
\STATE Update $\hat p_{t+1}$ and $\hat x_{0_{t+1}}$ using \eqref{adaptivelaw}, \eqref{forprojection}.
\STATE Obtain $\hat A_{t+1}$, $\hat B_{t+1}$ from \eqref{ABfrom_projection}, and $\hat x_{t+1}$ from \eqref{xhat}.
\end{algorithmic}
\end{algorithm}


\begin{remark}
    The number of vertices of $\Psi_t$ and $\mathbb X_{0_t}$ may grow with time. To preserve computational tractability, the adaptation of the sets can be halted once the number of vertices $L_{p_t}$ and $L_{x_t}$ reach prescribed upper bounds determined by the available computational resources. Alternatively, complexity-management techniques such as those proposed in \cite{lorenzen2019robust,CHISCI20011019,Veres01011999} may be employed.
\end{remark}

\begin{remark}
    Computing all the sets of Sec.~\ref{sec:COCPcomponents} and running the COCP again at $t\leftarrow t+1$ can be time-consuming. An alternative is to reuse the control input obtained from the previous step: $u_t=\mathcal Z_2 (\hat x_t, \mathbb T^{\hat x^*}_{1|t-1},\mathbb T^{u^*}_{1|t-1})$, along with the modified setup \eqref{setup_backup}.
\end{remark}

\begin{remark}
In computing $\widetilde{\mathbb X}_{t,i}$ using \eqref{TildeXti}, the set $\mathfrak D_{yu_t}$ consists of a component involving $\tilde p_t$ that is constant across the prediction horizon at any given time $t$. This component, along with the accumulated disturbance terms $\bigoplus_{k=0}^{t-1-k} F^{t-1-k}\mathbb D$ and $\mathfrak D_{yu_t}^{RPI}$, can be updated recursively across time steps, avoiding full recomputation at each $t$.
\end{remark}

\subsection{Recursive feasibility and stability analysis}\label{sec:recfeas}

\begin{theo}\label{thm:fixed_rec_feas}
  Consider the system \eqref{sys1} subject to the constraints \eqref{hc} for which Assumptions \ref{Assum:1}-\ref{assum:Gt} hold, and the adaptive observer that provides point estimates and updated sets using \eqref{adapobs1}-\eqref{ABfrom_projection}, \eqref{ch5ie:MatReg}. If the COCP $\mathbb P_t$ is feasible, then $\mathbb P_{t+k}$ is also feasible $\forall k\in\mathbb I_1^\infty$.
\end{theo}
\begin{proof}
Refer to Appendix \ref{app:thm:fixed_rec_feas}.
\end{proof}

\begin{theo}\label{thm:stab}
     Provided \footnotetext{For $t=0$, Steps \ref{off1step} and \ref{off2step} may be performed offline.}Assumptions \ref{Assum:1}-\ref{assum:Gt} hold for system \eqref{sys1} subject to the constraints \eqref{hc}, and the COCP $\mathbb P_t$ at $t=0$ is feasible, then, under Algorithms \ref{ch7:algoatube} and \ref{alg:obs},
     the following are guaranteed.
   \begin{enumerate}
       \item[(1)] The signals $x_t$, $\hat x_t$, $u_t$, $\hat\theta_t\in\mathcal L_\infty$.
       \item[(2)] The adaptive observer exhibits robust exponential stability, i.e., $\exists$ constants $c_1>0$, $c_2 \in (0,1)$ such that
\[
\left\|\hat x_t\right\| \leq c_1 c_2^t\left\| \hat x_0\right\| + c_3 \quad \forall t\in\mathbb I_0^\infty,
\]
for some bounded constant $c_3$ depending on the disturbance, state estimation error and prediction error bounds. Further, the state estimation error sets are uniformly bounded implying the actual plant also exhibits robust exponential stability.
   \end{enumerate}
       \end{theo}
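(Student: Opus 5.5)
The plan is to take the optimal value of $\mathbb P_t$ as a Lyapunov-type function, denoted $J_t^*\triangleq J(\hat x_t,\mu_t^*)$, and to show that it decreases at each step up to a bounded perturbation. We split the time axis into two kinds of instants. At \emph{fixed-estimate} instants, either Criterion~\ref{Crdare2} failed or the fallback \eqref{setup_backup} was triggered, so $(\hat A_{t+1},\hat B_{t+1},P_{t+1},K_{t+1})=(\hat A_t,\hat B_t,P_t,K_t)$. At \emph{switching} instants, Criterion~\ref{Crdare2} holds and $\mathbb P_{t+1}$ is feasible. Theorem~\ref{thm:fixed_rec_feas} together with the fallback guarantees that $\mu_t^*$ exists for all $t$. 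It therefore suffices to build, at each $t+1$, a feasible candidate $\bar\mu_{t+1}$ whose cost satisfies
\[
J(\hat x_{t+1},\bar\mu_{t+1})\le J_t^*-\sum_{j=1}^{H_t}\bigl(\|\mathfrak s^{[j]*}_{0|t}\|_Q^2+\|\mathfrak u^{[j]*}_{0|t}\|_R^2\bigr)+\gamma_t ,
\]
where $\gamma_t$ is uniformly bounded and depends only on the sets $\mathcal E_{t,i}$, $\bar{\mathcal X}_t$ and $\mathbb D$. Optimality then gives $J_{t+1}^*$ below the same bound.

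For the candidate we use the standard shifted tube. Set $\bar\alpha_{i|t+1}=\alpha^*_{i+1|t}$, $\bar\beta_{i|t+1}=\beta^*_{i+1|t}$ and $\bar{\mathfrak u}^{[j]}_{i|t+1}=\mathfrak u^{[j]*}_{i+1|t}$ for $i\le N-2$. The last section is appended via the terminal feedback: the vertices become $(\hat A+\hat BK)\mathfrak s^{[j]*}_{N|t}$ and the inputs $K\mathfrak s^{[j]*}_{N|t}$. By \eqref{adobsdyn}, $\hat x_{t+1}\in\mathbb T^{\hat x*}_{1|t}$, since $\varepsilon_{t,0}\in\mathcal E_{t,0}$ and \eqref{cons6m} holds. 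Hence the tail cost telescopes in the usual way.

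In the fixed-estimate case, the terminal step uses \eqref{originaldare} for $(P_t,K_t)$ and gives $\|(\hat A_t+\hat B_tK_t)s\|^2_{P_t}-\|s\|^2_{P_t}\le-\|s\|^2_Q-\|K_ts\|^2_R$. In the switching case the predicted model changes from $(\hat A_t,\hat B_t)$ to $(\hat A_{t+1},\hat B_{t+1})$. Here we would appeal to condition \eqref{ch7:dare2cri} with $t\leftarrow t+1$. It states that the new terminal cost $\|(\hat A_{t+1}+\hat B_{t+1}K_{t+1})s\|^2_{P_{t+1}}$ plus the stage cost $\|s\|_Q^2+\|K_ts\|_R^2$ is dominated by the old terminal cost $\|s\|^2_{P_t}$. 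This is exactly what closes the telescoping sum when the final tube vertex is propagated with the new model.

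The mismatch between the shifted old trajectory and the new model over the first $N-1$ steps enters as $(\hat A_t-\hat A_{t+1})\mathfrak s+(\hat B_t-\hat B_{t+1})\mathfrak u$. These terms are bounded, since all estimates lie in the compact set $\Psi_0$ and the tubes lie in the compact sets $\mathbb X$, $\mathbb U$. They are absorbed into $\gamma_t$, as is the offset from adapting $\beta$ to the updated $\mathbb G_{t+1}$.

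From the decrease inequality and the quadratic bounds $\lambda_{\min}(Q)\|\hat x_t\|^2\le J_t^*\le c\|\hat x_t\|^2+c'$ (the upper bound holds locally on the feasible set, with the terminal set containing the origin in its interior), we obtain $J_{t+1}^*\le\rho J_t^*+\bar\gamma$ with $\rho\in(0,1)$. Iterating gives the robust exponential bound on $\|\hat x_t\|$ with $c_2=\sqrt\rho$ and $c_3$ proportional to $\sqrt{\bar\gamma/(1-\rho)}$.

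For part (1), boundedness of $\hat x_t$ and $u_t$ follows because every tube section lies in $\widehat{\mathbb X}_{t,i}\subseteq\mathbb X$ and in $\mathbb U$. Boundedness of $\hat\theta_t$ follows from Theorem~\ref{theo:observer}(1) and the projection onto $\Pi_t\times\mathbb X_{0_t}\subseteq\Pi_0\times\mathbb X_0$ given by Lemma~\ref{lemma:nonfset}. Boundedness of $y_t$ then makes Theorem~\ref{theo:observer}(3) applicable, so $\tilde x_t\in\mathcal L_\infty$, with $\tilde x_t\in\bar{\mathcal X}_t\subseteq\bar{\mathcal X}_0$. Finally, $x_t=\hat x_t+\tilde x_t$ inherits the exponential bound with $c_3$ enlarged by $\sup\|\bar{\mathcal X}_0\|$.

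The main obstacle is the switching step. The decrease inequality needs the perturbation $\gamma_t$ to be uniformly bounded and independent of the cost itself. It also needs the new tube geometry $\mathbb G_{t+1}$ to accommodate the shifted sections, and this is not automatic when $\hat A$ changes. If a direct homothetic embedding fails, the first fallback is to treat a switch that breaks feasibility of the candidate as triggering \eqref{setup_backup}, which reduces that step to the fixed-estimate case.
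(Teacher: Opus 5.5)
\paragraph{Verdict.} Your architecture is the paper's: the optimal cost of $\mathbb P_t$ as Lyapunov function, a shifted tube as candidate, \eqref{originaldare} closing the terminal step when the estimate is frozen and \eqref{ch7:dare2cri} when it switches, a bounded perturbation, and iteration. Your quadratic bounds on $J_t^*$ are a cleaner route to a uniform contraction factor than the paper's $\gamma_2$, which is a time-varying ratio never shown to be bounded away from zero. There is, however, a genuine gap at the switching step.

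\paragraph{The switching step.}
\begin{itemize}
\item Once $(\hat p,\hat x_0)$ are updated, $\hat x_{t+1}$ is recomputed from \eqref{xhat} with $\hat p_{t+1},\hat x_{0_{t+1}}$. Your inclusion $\hat x_{t+1}\in\mathbb T^{\hat x^*}_{1|t}$ follows from \eqref{adobsdyn} only for the old estimates, so it fails in general. The paper's simulation shows exactly this at $t=1$.
\item At $t+1$, \eqref{cons6m} is imposed with $\hat A_{t+1},\hat B_{t+1},\mathcal E_{t+1,i},\mathbb G_{t+1}$. Criterion~\ref{Crdare2} relates no sets across the switch, so nothing ties $\widehat{\mathbb X}_{TS_{t+1}}$ to $\widehat{\mathbb X}_{TS_t}\ni\mathfrak s^{[j]^*}_{N|t}$. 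The appended input and section therefore need not be admissible.
\item Consequently the mismatch $(\hat A_t-\hat A_{t+1})\mathfrak s+(\hat B_t-\hat B_{t+1})\mathfrak u$ is a constraint violation of your candidate, not a cost offset. An infeasible candidate gives no upper bound on $J_{t+1}^*$, so the mismatch cannot be absorbed into $\gamma_t$. The same applies to your $\beta$-offset for a changed $\mathbb G_{t+1}$, which is a containment question.
\item The terminal algebra is also inconsistent. \eqref{ch7:dare2cri} charges the stage cost with $K_t$ but propagates with $\hat A_{t+1}+\hat B_{t+1}K_{t+1}$. No single appended input $v$ yields both $\|v\|_R^2=\|K_ts\|_R^2$ and the successor $(\hat A_{t+1}+\hat B_{t+1}K_{t+1})s$; a consistent version would need $K_{t+1}^\top RK_{t+1}$ there.
\item Your fallback (reverting whenever the shifted candidate is infeasible) is not Algorithm~\ref{ch7:algoatube}. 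That algorithm reverts only if Criterion~\ref{Crdare2} fails or $\mathbb P_{t+1}$ itself is infeasible. So the fallback would prove the claim only for a modified scheme.
\end{itemize}

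\paragraph{What the paper does, and what can be salvaged.} The paper's Appendix~\ref{app:thm_on_stab} does not close this gap either. It reuses the frozen-estimate candidate \eqref{nonada} and invokes \eqref{ch7:dare2cri} ``irrespective'' of which setup is active, so there is no missing trick to import.

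Under Algorithms~\ref{ch7:algoatube} and~\ref{alg:obs}, only a crude bound holds. Every feasible $\mathbb P_{t+1}$ has its vertices in $\mathbb X$ and $\mathbb U$ with $H_{t+1}\le H_{\max}$, so $J_{t+1}^*\le\bar J$ provided $P_t$ is uniformly bounded (which the paper also takes for granted). This makes $\gamma_t$ uniformly bounded and gives the displayed estimate, but with $c_3$ of the size of $\mathbb X$. That is no more than part (1) already yields, since $\hat x_t\in\widehat{\mathbb X}_{t,0}\subseteq\mathbb X$.

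A $c_3$ governed by the disturbance, estimation-error and prediction-error sets, as the statement intends, needs a genuinely feasible candidate at switching instants. One option is a candidate re-centred at the new $\hat x_{t+1}$, with sections enlarged to cover the mismatch. That mismatch is bounded by a multiple of $\|\hat\theta_{t+1}-\hat\theta_t\|_2$, which is small only in the mean-square sense of Theorem~\ref{theo:observer}(2). Such a construction in turn requires slack in the tightened constraints.

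\paragraph{A smaller point in part (1).} The inclusion $\bar{\mathcal X}_t\subseteq\bar{\mathcal X}_0$ need not hold once $\hat x_{0_t}$ or $\hat p_t$ move. The translates $\mathbb X_{0_t}\ominus\hat x_{0_t}$ and $\Pi_t\ominus\hat p_t$ are not nested, and Assumption~\ref{Assum:1} concerns only $\widetilde{\mathbb X}_{0_0}$. Bound $\tilde x_t$ through $x_t,\hat x_t\in\mathbb X$ instead.
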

\begin{proof}
    The proof for $x_t$, $\hat x_t$, $u_t$, $\hat \theta_t\in\mathcal L_\infty$ easily follows due to constraint tightening and recursive feasibility of the COCP, by Theorem \ref{thm:fixed_rec_feas}, and the observer properties guaranteed by Lemma \ref{lemma:nonfset} and Theorem \ref{theo:observer}. 
        
    For the proof of the second part, refer to Appendix \ref{app:thm_on_stab}. 
\end{proof}
\begin{remark}    
    The overall scheme is initialized with user-specified estimates, rather than through the observer in Algorithm \ref{alg:obs}. These are used in Algorithm \ref{ch7:algoatube} to solve the COCP and obtain the optimal decision variable $\mu_t^*$. The resulting input is then applied to the system to obtain the next output; the corresponding measurements are supplied to the observer to produce updated estimates via Algorithm \ref{alg:obs}. This ensures that the conditions of uniformly bounded input and output, as required for observer convergence using Theorem \ref{theo:observer}, are met from the outset.
\end{remark}
\begin{theo}\label{coro:luenberger}
When no adaptation is performed, i.e., the point estimate $\hat\theta_t$ and the uncertainty sets $\Pi_t$, $\Psi_t$ and $\mathbb X_{0_t}$ are held fixed, the resulting closed-loop scheme reduces to a Luenberger observer-based output feedback tube MPC where the parametric uncertainty is treated as an additive disturbance, without any loss of recursive feasibility, robustness and stability guarantees for the COCP $\mathbb P_t$. \end{theo}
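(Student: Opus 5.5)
The plan is to specialize the adaptive scheme to the case $\hat\theta_t\equiv\hat\theta_0$ and $\Pi_t\equiv\Pi_0$, $\Psi_t\equiv\Psi_0$, $\mathbb X_{0_t}\equiv\mathbb X_{0_0}$, and show that each ingredient reduces to its counterpart in a classical Luenberger observer-based tube MPC. Theorems \ref{thm:fixed_rec_feas} and \ref{thm:stab} then carry over with the adaptive steps deleted.

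First, I will show that the observer is a Luenberger observer. With $\hat p_t=\hat p_0$ and $\hat x_{0_t}=\hat x_{0_0}$ fixed, \eqref{xhat} and \eqref{FilterM} give
\[
\hat x_{t+1}=M_{t+1}\hat p_0+F^{t+1}\hat x_{0_0}=F\hat x_t+[Y_t\;\;U_t]\hat p_0=F\hat x_t+(\hat A_0-F)y\text{-part}+\hat B_0u_t .
\]
Using the observable canonical structure, $[Y_t\;U_t]\hat p_0=(\hat{\mathcal A}_0-\mathcal F)y_t+\hat B_0u_t$ in the first $q$ columns, where $(\hat A_0-F)x_t$ depends on $x_t$ only through $Cx_t=y_t$. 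The recursion therefore becomes $\hat x_{t+1}=\hat A_0\hat x_t+\hat B_0u_t+L(y_t-C\hat x_t)$ with output-injection gain $L$ given by the first block $\hat{\mathcal A}_0-\mathcal F$. The observer error matrix $\hat A_0-LC=F$ is Schur by construction. The error dynamics \eqref{state_est_err_dyn} become $\tilde x_{t+1}=F\tilde x_t+[Y_t\;U_t]\tilde p_0+d_t$. Here the term $[Y_t\;U_t]\tilde p_0=(A-\hat A_0)x_t+(B-\hat B_0)u_t$ is a bounded additive disturbance lying in the fixed set $\mathfrak D_{yu_0}\oplus\mathbb D$, since $y_t\in C\mathbb X$ and $u_t\in\mathbb U$ under constraint satisfaction. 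This is exactly the ``parametric uncertainty as additive disturbance'' interpretation.

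Second, I will show that every COCP ingredient is a standard fixed tube MPC ingredient. With fixed sets, $P_t=P_0$, $K_t=K_0$ and $\widehat{\mathbb X}_{TS_t}=\widehat{\mathbb X}_{TS_0}$ can be retained. Criterion \ref{Crdare2}(a),(b) then collapse to \eqref{originaldare} for $(\hat A_0,\hat B_0)$, and (c) follows from Assumption \ref{assu:XTS0} because $\bar{\mathcal E}_t\subseteq\bar{\mathcal E}_0$. That inclusion holds since $F^{t+N-1}\widetilde{\mathbb X}_{0_0}\subseteq F^{N-1}\widetilde{\mathbb X}_{0_0}$ by Assumption \ref{Assum:1}, and likewise $\mathcal X_t\supseteq\mathcal X_0$. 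The backup step \eqref{setup_backup} is never triggered for the criterion. The tightened sets $\widehat{\mathbb X}_{t,i}$ are monotonically nondecreasing in $t$ by the same invariance argument, which is the Remark after Assumption \ref{assum:feasibility}. The prediction model \eqref{adobsdyn} is a fixed LTI model with bounded additive disturbance $\varepsilon_{t,i}\in(\hat A_0-F)\bar{\mathcal X}_0$, and $\mathbb G_0$ is an RPI set for it. Hence $\mathbb P_t$ is a homothetic tube MPC of the type in \cite{rakovic2012homothetic}, with time-varying but only relaxing constraints.

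Third, I will establish the guarantees. For recursive feasibility, I will use the shifted candidate: tubes $\mathbb T^{\hat x^*}_{i+1|t}$ and inputs $\mathbb T^{u^*}_{i+1|t}$ for $i\le N-2$, appended with the terminal step $\alpha_{N|t+1}=(\hat A_0+\hat B_0K_0)\alpha^*_{N|t}$ and vertex inputs $K_0\mathfrak s^{[j]*}_{N|t}$, and $\beta$ kept so the section stays in $\widehat{\mathbb X}_{TS_0}$ by its RPI property. Membership $\hat x_{t+1}\in\mathbb T^{\hat x^*}_{1|t}$ follows from \eqref{cons6m}, because the realized $\varepsilon$ lies in $\mathcal E_{t,0}$. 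Constraint satisfaction of the shifted sections follows from $\widehat{\mathbb X}_{t+1,i}\supseteq\widehat{\mathbb X}_{t,i+1}$. For stability, the standard decrease $V_{t+1}-V_t\le-\sum_j(\|\mathfrak s_{0|t}^{[j]}\|_Q^2+\|\mathfrak u_{0|t}^{[j]}\|_R^2)$ is obtained from \eqref{originaldare}, which is exactly (b) with $P_{t-1}=P_t$. Together with quadratic bounds on $V$, this yields the robust exponential bound of Theorem \ref{thm:stab}, with $c_3$ determined by $\mathbb G_0$ and $\bar{\mathcal X}_0$. Boundedness of $\tilde x_t$ follows from $\tilde x_t\in\bar{\mathcal X}_0$, giving stability of the plant.

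The main obstacle is the terminal step and the homothetic shift. The candidate $\beta$ and vertex inputs must satisfy \eqref{cons6m} with $\mathcal E_{t+1,N-1}$ inside the terminal set. I will handle this using $\mathcal E_{t+1,N-1}\subseteq\bar{\mathcal E}_0$ and the invariance $(\hat A_0+\hat B_0K_0)\widehat{\mathbb X}_{TS_0}\oplus\bar{\mathcal E}_0\subseteq\widehat{\mathbb X}_{TS_0}$, which gives a feasible terminal section in $\widehat{\mathbb X}_{TS_0}$. The cost decrease will be handled by the vertex-wise convexity argument of \cite{rakovic2012homothetic}. If these two points go through, the rest is the verification already covered by Theorem \ref{thm:fixed_rec_feas}.
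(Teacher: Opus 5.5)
Your first step is the paper's argument in Appendix~\ref{app:coro:luenberger}: freeze $\hat p,\hat x_0$, obtain $\hat x_{t+1}=F\hat x_t+(\hat A_0-F)x_t+\hat B_0u_t$, and get error dynamics driven by the lumped disturbance $(A-\hat A_0)x_t+(B-\hat B_0)u_t+d_t$. Your version is sharper. Exhibiting $L=\hat{\mathcal A}_0-\mathcal F$ shows that $\hat A_0-LC=F$ holds identically. The paper only remarks that the two error dynamics coincide provided $F$ is chosen equal to $\hat A-L_{obs}C$, and then defers feasibility and stability to the Luenberger-observer tube MPC literature. Your second step is correct and a useful addition: Criterion~\ref{Crdare2} collapses to \eqref{originaldare} with $(P_0,K_0,\widehat{\mathbb X}_{TS_0})$, you have $\bar{\mathcal E}_t\subseteq\bar{\mathcal E}_0$ and $\mathcal X_t\supseteq\mathcal X_0$, and freezing $\mathbb G_0$ is legitimate.

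Where you go beyond the paper and re-prove recursive feasibility yourself, the terminal step does not go through as argued. Invariance of $\widehat{\mathbb X}_{TS_0}$ only gives $(\hat A_0+\hat B_0K_0)\mathbb T^{\hat x^*}_{N|t}\oplus\mathcal E_{t+1,N-1}\subseteq\widehat{\mathbb X}_{TS_0}$. Constraints \eqref{cons4m} and \eqref{cons6m}, however, require a homothet $\alpha\oplus\beta\mathbb G_0$ sandwiched between this set and $\widehat{\mathbb X}_{TS_0}$. When the shapes of $\mathbb G_0$ and $\widehat{\mathbb X}_{TS_0}$ differ, invariance alone does not provide one.

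Keeping $\beta_{N|t+1}=\beta^*_{N|t}$ does not fix this. It secures \eqref{cons6m} when $\beta^*_{N|t}\ge 1$, since then Assumption~\ref{assum:Gt} and $\mathcal E_{t+1,N-1}\subseteq\hat{\mathcal E}_0$ give $\beta^*_{N|t}(\hat A_0+\hat B_0K_0)\mathbb G_0\oplus\mathcal E_{t+1,N-1}\subseteq\beta^*_{N|t}\mathbb G_0$. It does not in general secure it when $\beta^*_{N|t}<1$. Even when $\beta^*_{N|t}\ge 1$, the inclusion $(\hat A_0+\hat B_0K_0)\alpha^*_{N|t}\oplus\beta^*_{N|t}\mathbb G_0\subseteq\widehat{\mathbb X}_{TS_0}$ does not follow.

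You need an additional ingredient, e.g.\ a terminal condition on $(\alpha_{N|t},\beta_{N|t})$ that is invariant under the induced center/scaling update, as in \cite{rakovic2012homothetic}. Otherwise, stop at the reduction and cite, as the paper does. Falling back on Theorem~\ref{thm:fixed_rec_feas} does not close the gap, because Appendix~\ref{app:thm:fixed_rec_feas} asserts $\underline{\mathbb T^{\hat x}_{N|t+1}}\subseteq\underline{\widehat{\mathbb X}_{TS_{t+1}}}$ without proof.

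There are also two smaller points.

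First, the tightened sets $\widehat{\mathbb X}_{t,i}$ are not nondecreasing in $t$ under Assumption~\ref{Assum:1}. The term $F^{t+i}\widetilde{\mathbb X}_{0_0}$ shrinks, but the accumulated sum in \eqref{TildeXti} grows (e.g.\ if $\widetilde{\mathbb X}_{0_0}=\{0\}$). The Remark you invoke concerns $\mathbb X\ominus\bar{\mathcal X}_t$, not these sets. What you actually need, and what holds with equality for frozen sets, is $\widehat{\mathbb X}_{t+1,i}=\widehat{\mathbb X}_{t,i+1}$. For the shifted terminal section you additionally need $\widehat{\mathbb X}_{TS_0}\subseteq\mathcal X_0\subseteq\widehat{\mathbb X}_{t+1,N-1}$, which follows from $\widetilde{\mathbb X}_{t+1,N-1}\subseteq F^{N}\widetilde{\mathbb X}_{0_0}\oplus\mathfrak D^{RPI}_{yu_0}\oplus\mathbb D^{RPI}$.

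Second, the offset-free decrease you state is not what your candidate delivers. Its terminal vertices are $\alpha_{N|t+1}+\beta_{N|t+1}\mathfrak g_0^{[j]}$ rather than $(\hat A_0+\hat B_0K_0)\mathfrak s^{[j]*}_{N|t}$. This mismatch, together with $\varepsilon$, produces the offset $c_3$, the analogue of $\gamma_3$ in Appendix~\ref{app:thm_on_stab}.
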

\begin{proof}
    Refer to Appendix \ref{app:coro:luenberger}.
\end{proof}

\section{Simulation Results and Discussion}

A second-order LTI system\footnote{A second-order system is considered to facilitate visualization of the tube geometry.} is considered:
\begin{align*}
   & x_{t+1}=\begin{bmatrix}
        -1.2 & 1\\ 0.2 & 0
    \end{bmatrix}x_t+\begin{bmatrix}
        4\\-3.233
    \end{bmatrix}u_t+d_t,\\
   & y_t=\begin{bmatrix}
        1&0
    \end{bmatrix}x_t,
\end{align*}
subject to constraints $\|x_t\|_\infty \leq 40$, $\|u_t\|_\infty \leq 4$, disturbance bound $\|d_t\|_\infty \leq 0.1$, and parametric uncertainty set $\Psi_0=\mathrm{\mathbf{co}}\left(\left\{\psi_0^{[i]} \right\}_{i=1:3} \right)$\footnote{The vertices of $\Psi_0$ are $\psi_0^{[1]}=\begin{bmatrix}
      -1.1 & 1 & 4\\0.2 & 0 & -3.1  
    \end{bmatrix}$, $\psi_0^{[2]}=\begin{bmatrix}
      -1.2 & 1 & 4\\0.2 & 0 & -3  
    \end{bmatrix}$, and $\psi_0^{[3]}=\begin{bmatrix}
      -1.3 & 1 & 4\\0.2 & 0 & -3.6  
    \end{bmatrix}$.}. The initial set $\mathbb X_0$ is shown in grey in Fig.~\ref{fig:PsiX0} (b). The COCP is implemented with horizon $N=10$, weights $Q=I_2$, $R=0.1$, and observer parameters $\kappa=0.2$, and $\sigma=0.9$. The true and observer initial states are chosen as 
\begin{align*}
    x_0=\begin{bmatrix}
        12\\39
    \end{bmatrix},\quad
    \hat x_0=\begin{bmatrix}
        20\\31
    \end{bmatrix},\quad\text{with}\quad
    F=\begin{bmatrix}
        0.03 & 1\\ 0.01 & 0
    \end{bmatrix}.
\end{align*}

The trajectories of the true and observer states, and control input are shown in Figs.~\ref{figStateTube_x} and \ref{figinput}, respectively. The state tube-sections in Fig.~\ref{figStateTube_x}, given by $\hat x_t\oplus \mathbb T^{\hat x}_{0|t}$, enclose the true state trajectory at all times while satisfying the imposed state constraint. The input constraint is also satisfied.

\begin{figure}[t!]
    \vspace{0.23cm}\centering
\framebox{\parbox{3in}{\includegraphics[width=\linewidth]{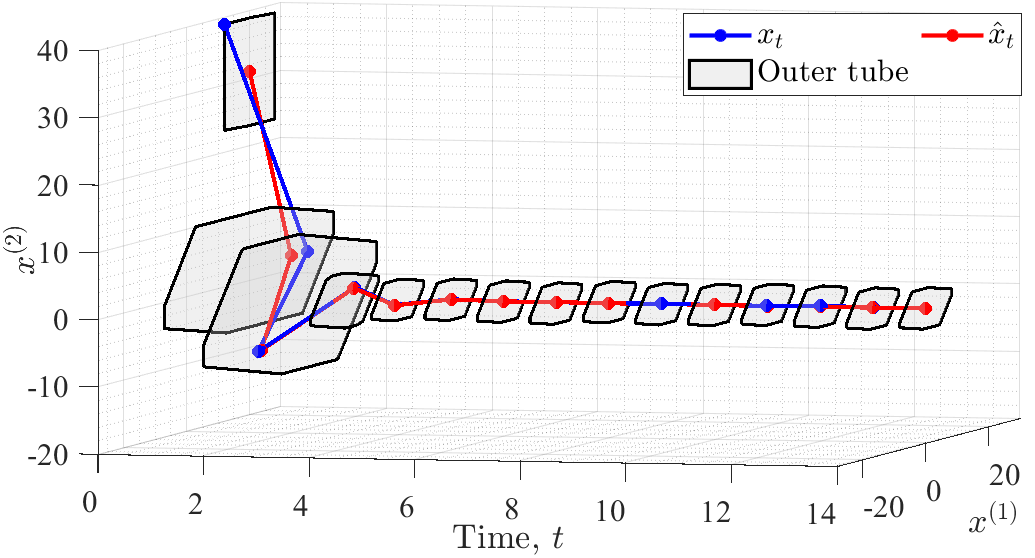}}}
    \caption{True state and state estimate trajectories, with the outer tube (in grey) guaranteed to contain the true state trajectory.}
    \label{figStateTube_x}
\end{figure}
\begin{figure}[t!]
    \vspace{0.23cm}\centering
\framebox{\parbox{3in}{\includegraphics[width=\linewidth]{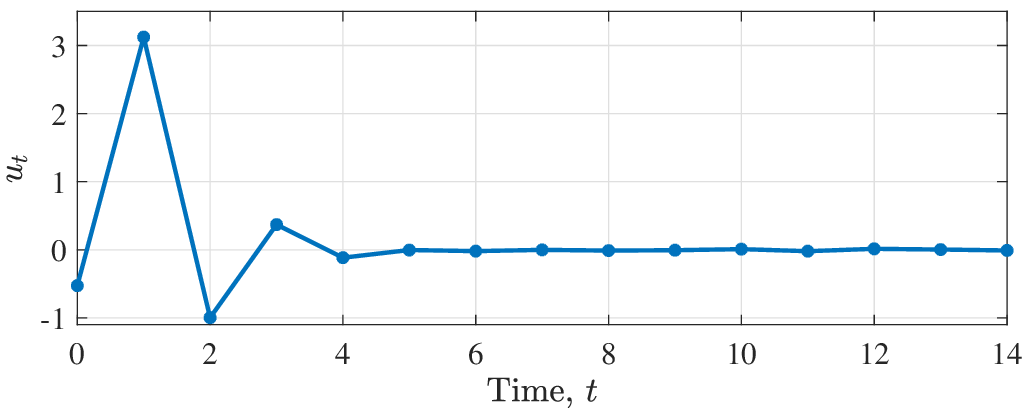}}}
    \caption{Control input.}
    \label{figinput}
\end{figure}

The adaptation of the tubes $\mathcal T^{\hat x}_t$ for the state estimate is illustrated in Fig.~\ref{fig:adaptivetubesall} for $t=0$ to $t=3$. Convergence of both the state and observer state trajectories to neighbourhoods of the origin is observed, along with decay of the input (see Figs.~\ref{figStateTube_x}, \ref{figinput}). The first two tube-sections returned by the COCP at previous time instant are shown in the background in the last three sub-figures of Fig. \ref{fig:adaptivetubesall}; the tube size visibly reduces with adaptation, giving a better characterization of the uncertainty in propagation of observer dynamics. The changes in the tube geometry are shown in Fig. \ref{fig:basicp}. 
\begin{figure}[t!]
    \vspace{0.23cm}\centering
\framebox{\parbox{3in}{\includegraphics[width=\linewidth]{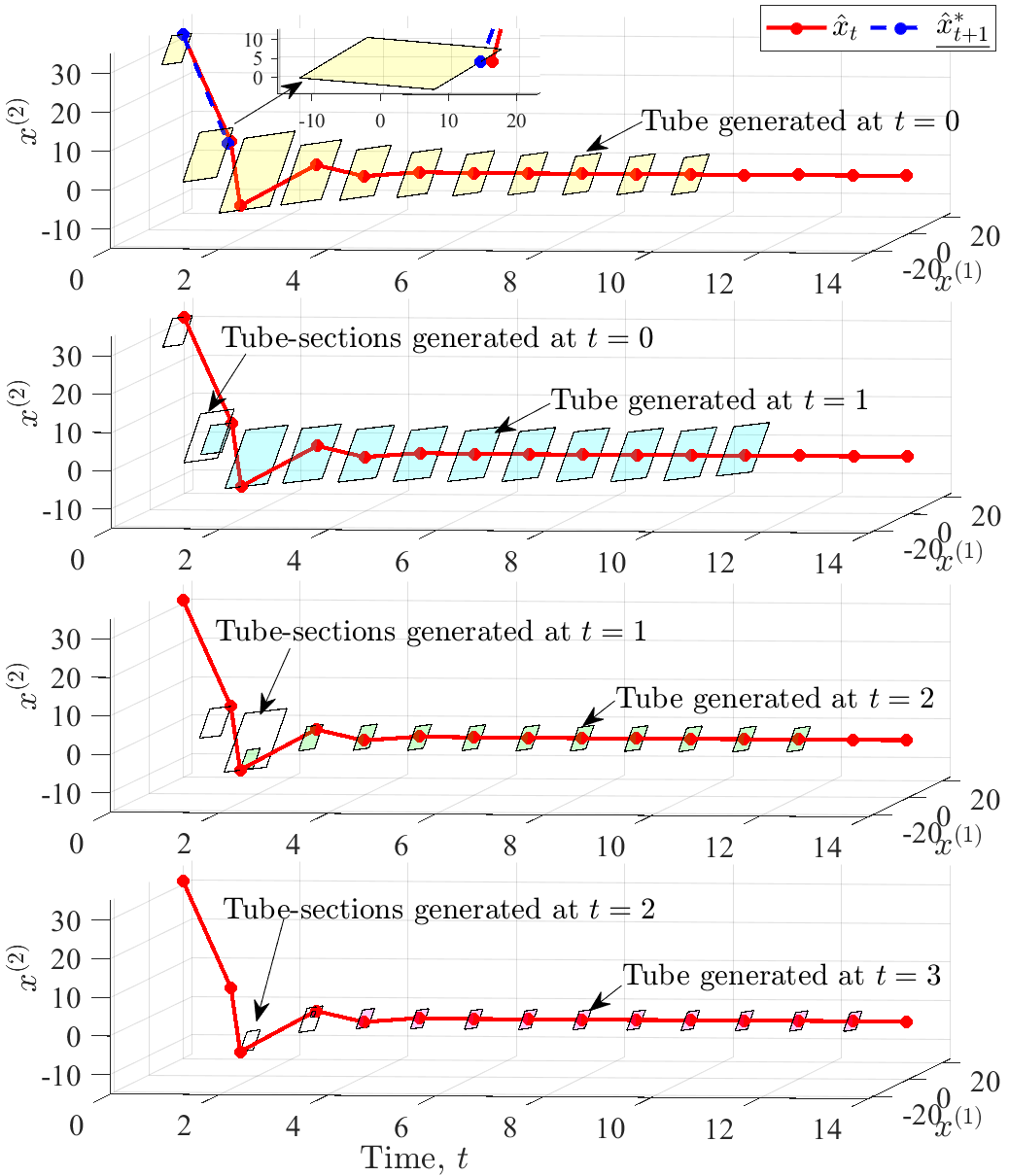}}}
    \caption{Tube for state estimate trajectory generated by the COCP at $t=0,\;1,\;2,\;3$, in yellow, cyan, green and pink, respectively.}
    \label{fig:adaptivetubesall}
\end{figure}
\begin{figure}[t!]
    \vspace{0.23cm}\centering
\framebox{\parbox{3in}{\includegraphics[width=\linewidth]{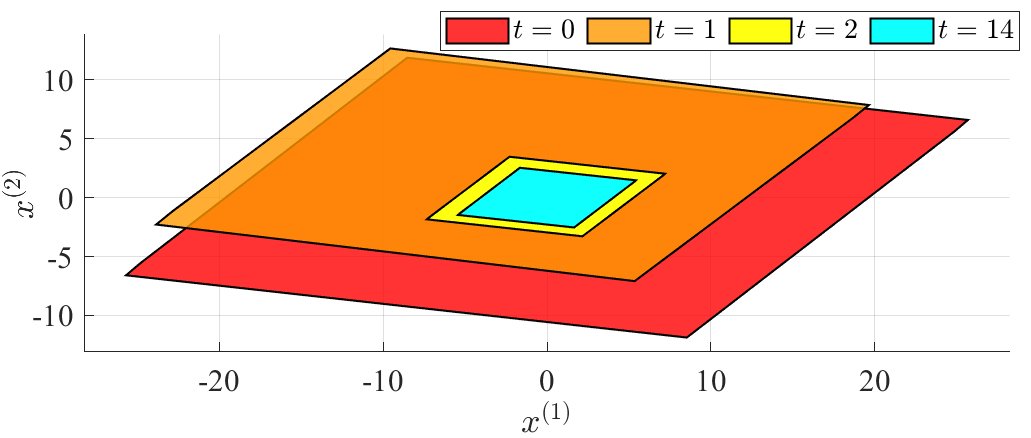}}}
    \caption{The polytopes forming the tube cross-sectional shape at time $t$.}
    \label{fig:basicp}
\end{figure}
\begin{figure}[t!]
    \vspace{0.23cm}\centering
\framebox{\parbox{3in}{\includegraphics[width=\linewidth]{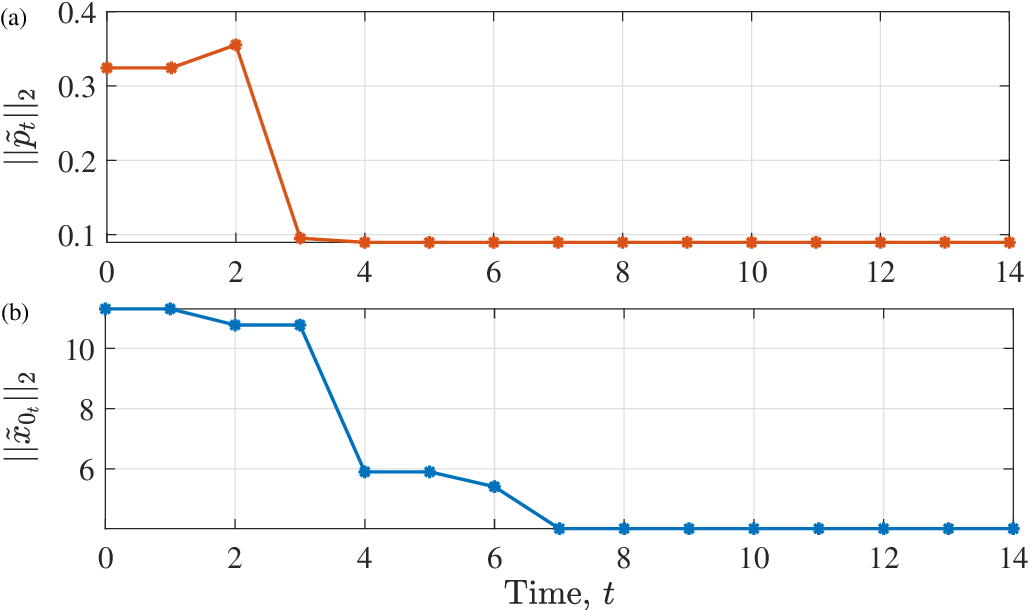}}}
    \caption{Norms of the parameter estimation error and the initial state estimation error.}
    \label{fig:theta}
\end{figure}

It is worth noting that the state estimate at $t=1$ is not contained in the tube-section $\mathbb T^{\hat x^*}_{1|0}$ (zoomed in first sub-figure of \ref{fig:adaptivetubesall}). This is expected; the yellow tube was generated under a fixed point estimate, whereas the state estimate is updated based on the improved uncertainty sets and point estimates of $\hat p_t$ and $\hat x_{0_t}$. To illustrate this effect, the first sub-figure additionally shows, using the blue dashed line $\underline{(\cdot)}$, the state estimate trajectory obtained when the point estimates are held fixed. As expected, this trajectory remains contained within the tube-section $\mathbb T^{\hat x^*}_{1|0}$.

\begin{figure}[t!]
    \vspace{0.23cm}\centering
\framebox{\parbox{3in}{\includegraphics[width=\linewidth]{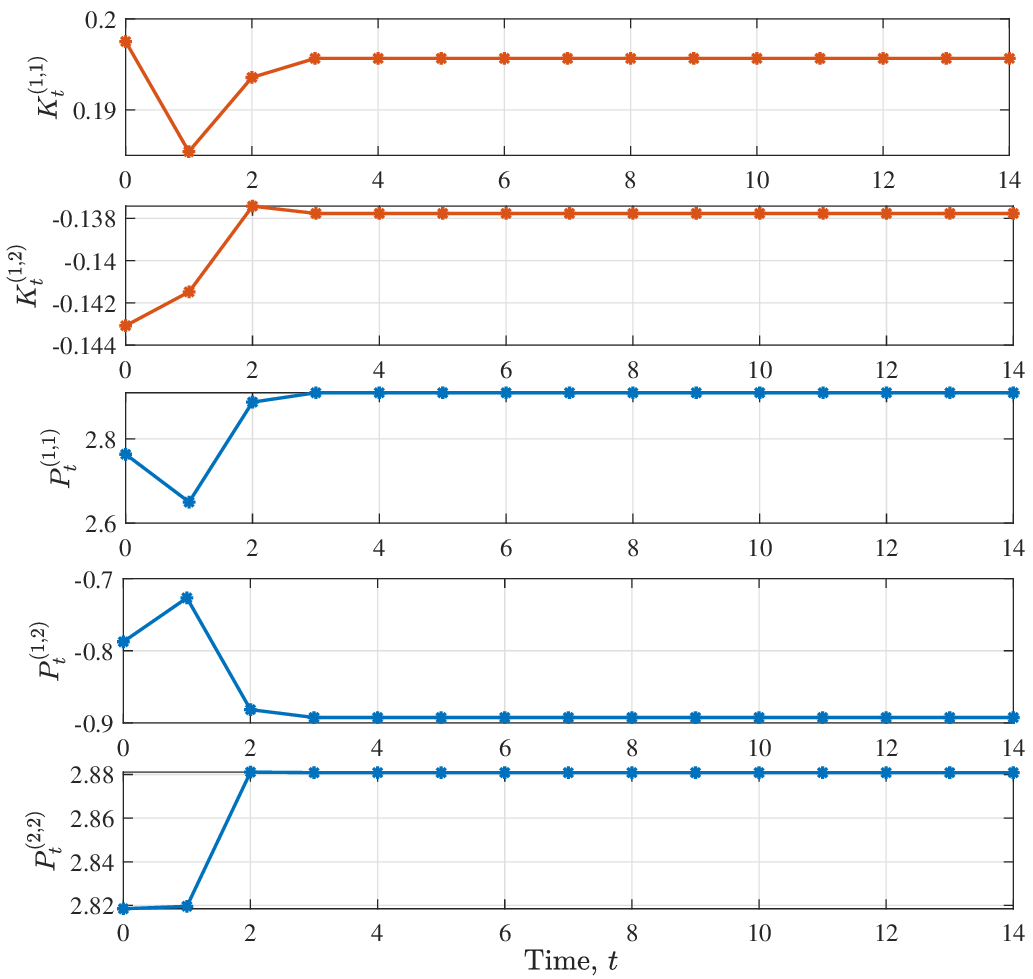}}}
    \caption{Variation of the elements of the stabilizing feedback gain $K_t$, and the terminal cost weight $P_t$.}
    \label{fig:PKt}
\end{figure}
\begin{figure}[t!]
    \vspace{0.23cm}\centering
\framebox{\parbox{3in}{\includegraphics[width=\linewidth]{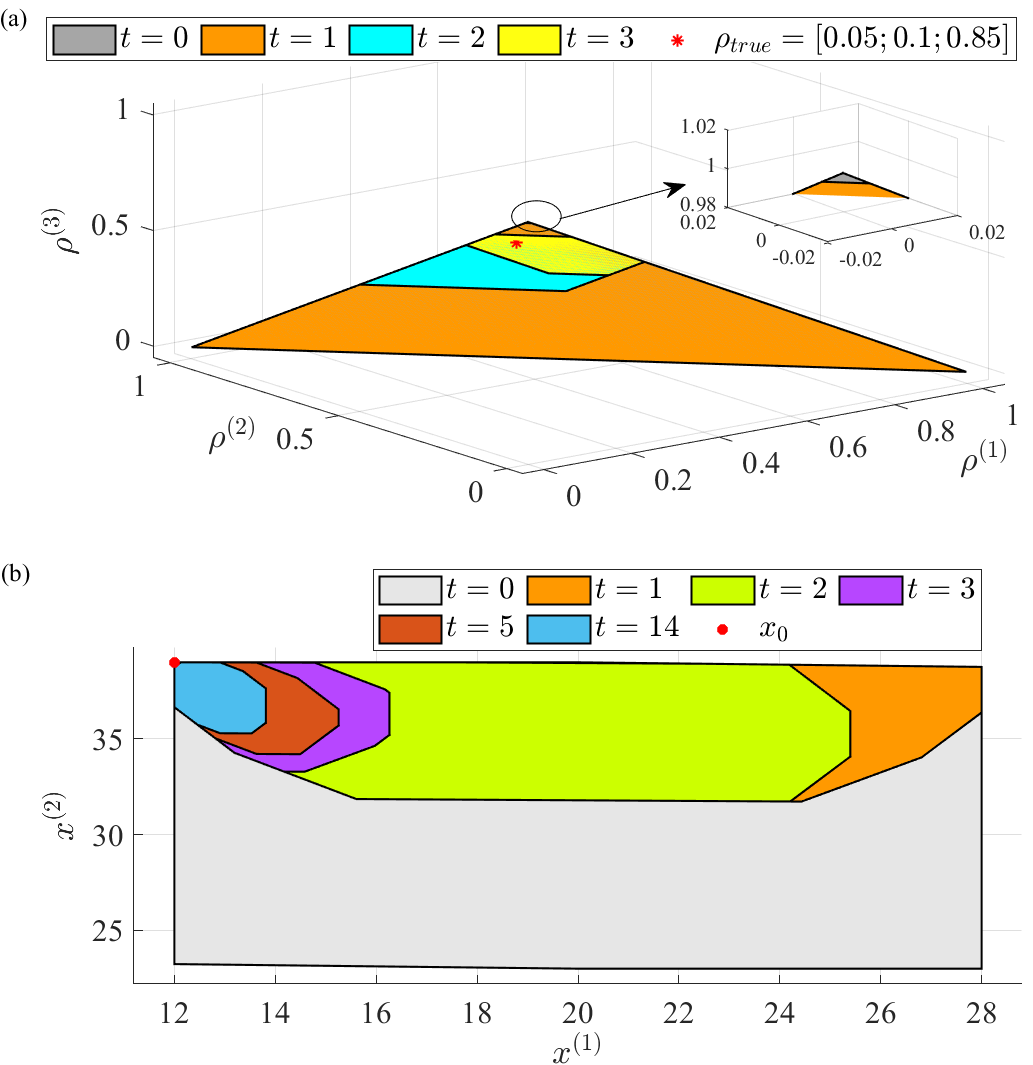}}}
    \caption{Reduction in size of uncertainty sets: (a) Barycentric representation of the sets $\Psi_t$ containing the true parameter combination $[0.05;0.1;0.85]$, using $\rho^{(i)}$. (b) Sets ${\mathbb X}_{0_t}$ containing the initial true state $x_0$.}
    \label{fig:PsiX0}
\end{figure}

The parameter and state estimation error norms $\| \tilde p_t\|_2$ and $\| \tilde x_{0_t}\|_2$, and the associated variations in $K_t$ and $P_t$ are displayed in Figs.~\ref{fig:theta} and \ref{fig:PKt}, respectively. The contraction of the uncertainty sets $\Psi_t$ and $\mathbb X_{0_t}$ with time are shown in Fig.~\ref{fig:PsiX0} (a) and (b), respectively, with Fig.~\ref{fig:PsiX0} (a) giving a barycentric representation using $\rho^{(i)}$ for $\psi_0^{[i]}$, for ease in visualization. Both the parameter estimation norm and the uncertainty set sizes $\Psi_t$ and $\mathbb X_{0_t}$ decrease with incoming data\footnote{An improved convergence may be achieved with a persistently exciting regressor. Incorporating excitation constraints on the COCP has been explored in \cite{marafioti2014persistently,lu2023robust}.}. Notably, the modification of the regressors in \eqref{ch5ie:MatReg} enables learning of the set $\mathbb X_{0_t}$ beyond $t=2$. 

A comparison with the non-adaptive case and \cite{kogel2017robust} is provided in Fig.~\ref{fig:comparexJ}, where the RMS value of $\|x_t\|_2$ and cumulative stage cost are shown. Improved performance is observed for the proposed adaptive scheme once sufficient learning has occurred. In the absence of adaptation, comparable performance with \cite{kogel2017robust} is observed, consistent with Theorem \ref{coro:luenberger}.
\begin{figure}[t]
    \vspace{0.23cm}\centering
\framebox{\parbox{3in}{\includegraphics[width=\linewidth]{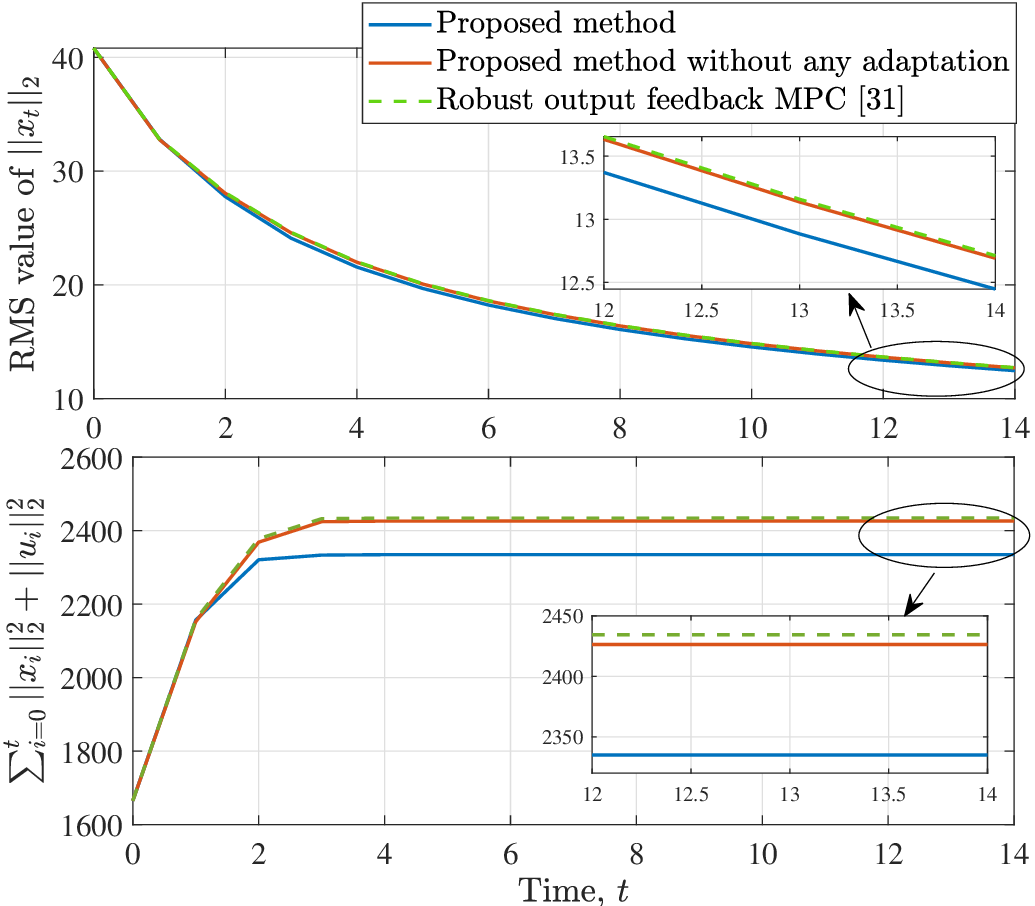}}}
    \caption{Comparison of RMS value of $\|x_t\|_2$ and the cumulative stage cost for the proposed method with and without any adaptation and \cite{kogel2017robust}.}
    \label{fig:comparexJ}
\end{figure}

\section{Conclusion}
An adaptive observer-based output feedback MPC framework with adaptive tubes has been developed for discrete-time LTI systems subject to parametric and additive uncertainties. Point and set estimates of the unknown parameters and initial state are obtained using an adaptive observer with joint set membership-based identification, enabling progressive refinement of uncertainty descriptions. The resulting scheme induces adaptation in the tube geometry and associated sets, thereby rendering the tube an adaptive nature. As uncertainty contracts with data, constraint tightening relaxes progressively, yielding improved closed-loop performance without sacrificing guarantees.

Recursive feasibility, robust exponential stability, and boundedness of all closed-loop signals are established for the resulting time-varying COCP. Crucially, these guarantees are obtained without requiring a common quadratically stabilizing linear feedback gain over the entire uncertainty set; a compatibility condition on successive terminal ingredients, together with a backup mechanism, suffices. In the limiting case of no adaptation, the framework recovers a standard Luenberger observer-based tube MPC. 

The result establishes a pathway from set membership identification to certified adaptive predictive control under output feedback. Future work will focus on reducing the computational complexity associated with online updates and extending the framework to a more general class of systems.

        
\appendices
\section{Bijective mapping between matrix and vector parameterizations}\label{app:z1z1inv}
The bijection $ \mathcal Z_1$ is defined as
\begin{align}
     \mathcal Z_1:&\; \mathbb R^{n\times (n+m)}\times \mathbb R^{qn}\times \mathbb{I}_1^\infty  \rightarrow  \mathbb{R}^{qn+mn} \nonumber\\
    &\left( \bar\psi,f,q  \right)\mapsto \begin{bmatrix}
       \mathrm{\mathbf{vec}} \left( \bar\psi(:,1:q)\right)-f \\ \mathrm{\mathbf{vec}} \left(\bar\psi(:,n+1:n+m) \right)
    \end{bmatrix},
\end{align}
i.e., $ \mathcal Z_1 \left(\begin{bmatrix}
    A&B
\end{bmatrix},f,q \right)= p$, and its inverse is
\begin{subequations}
\begin{align}
     \mathcal Z_1^{-1}&\;:\mathbb R^{qn+mn}\times \mathbb R^{qn} \times \mathbb I_1^\infty\rightarrow \mathbb R^{n\times (n+m)} \nonumber\\
    &(\bar p,f,q)\mapsto \left[\begin{array}{c|c|c}
   \bar{\mathcal{A}}\;&\begin{array}{cc}
         I_{n-q}\\0_{q\times (n-q) }
    \end{array} \;&\bar B
    \end{array}\right],\\
   \text{where } &\bar{ \mathcal A}\triangleq \mathrm{\mathbf{vec}}^{-1}(\bar p(1:qn,1)+f,n),\\
   & \bar B\triangleq  \mathrm{\mathbf{vec}}^{-1}(\bar p(qn+1:qn+mn,1),n).
\end{align} 
\end{subequations}

\section{Recursive construction of the augmented regression used by the adaptive observer}\label{app:ywNdefinitions}
The detailed recursive construction of the augmented regression signals and associated uncertainty sets used in Sec. \ref{augmentedeq} are given below.
\begin{subequations}\label{ch5ie:MatReg}
\begin{align}
&\omega_{0_{(i)}}\triangleq 0_{q\times(qn+mn+n)}\;\;\forall i\in\mathbb I_1^{qn+mn+n-1},\\
&\omega_{t_{(0)}}\triangleq \omega_t\;\;\forall t\in\mathbb I_0^\infty,\\
& \omega_{t_{(i)}}\triangleq \sigma \omega_{{t-1}_{(i)}}+(1-\sigma)\omega_{{t-1}_{(i-1)}} \nonumber\\
&\hspace{2.5cm}\forall (t,i)\in\mathbb {I}_1^\infty \times\mathbb I_1^{qn+mn+n-1}, \\
&y_{0_{(i)}}\triangleq 0_{q}\;\;\forall i\in\mathbb I_1^{qn+mn+n-1},\\
&y_{t_{(0)}}\triangleq y_t\;\;\forall t\in\mathbb I_0^\infty,\\
& y_{t_{(i)}}\triangleq \sigma y_{{t-1}_{(i)}}+(1-\sigma)y_{{t-1}_{(i-1)}} \nonumber\\
&\hspace{2.5cm}\forall (t,i)\in\mathbb {I}_1^\infty \times\mathbb I_1^{qn+mn+n-1},\end{align}\begin{align} 
&  \mathcal N_{0_{(i)}}\triangleq \{0\}\;\;\forall i\in\mathbb I_0^{qn+mn+n-1},\\
& \mathcal N_{t_{(0)}}\triangleq \bigoplus_{k=0}^{t-1} CF^{t-1-k}\mathbb D\;\;\forall t\in\mathbb I_1^\infty,\\
&  \mathcal N_{t_{(i)}}\triangleq \sigma \mathcal N_{t-1_{(i)}}\oplus (1-\sigma) \mathcal N_{t-1_{(i-1)}}\nonumber\\
&   \hspace{3cm}\forall (t,i)\in\mathbb {I}_1^\infty \times\mathbb I_1^{qn+mn+n-1} ,\label{calNSets} 
\end{align} 
with $\sigma\in (0,1)$.
\end{subequations}

\section{Proof of Lemma \ref{lemma:nonfset}}\label{app:nonfset}
The matrices $\mathcal Y_t$ and $\mathcal W_t$ are constructed using data measured from the system \eqref{sys1}. Consequently, the true quantities $p$ and $x_0$ satisfy
\begin{align*}
y_{t_{(i)}}-\omega_{t_{(i)}}\begin{bmatrix}
p^\top & x_0^\top
\end{bmatrix}^\top \in \mathcal N_{t_{(i)}}\; \forall (t,i)\in\mathbb I_0^\infty \times \mathbb I_0^{qn+mn+n-1}.
\end{align*}
From the definition of the non-falsified set in \eqref{nonfset}, it follows that
\begin{align}\label{Appeq1_new}
(p,x_0)\in \Xi_t \quad \forall t\in\mathbb I_1^\infty.
\end{align}

The uncertainty sets are updated through intersection as defined in \eqref{pix1}. Hence, the updated sets are nested and satisfy
\begin{align*}
\Pi_t \times \mathbb X_{0_t} \subseteq \Pi_{t-1} \times \mathbb X_{0_{t-1}} \quad \forall t\in\mathbb I_1^\infty.
\end{align*}
Since $(p,x_0)\in \Pi_0 \times \mathbb X_{0_0}$ and \eqref{Appeq1_new} holds, it follows by induction that
\begin{align*}
&(p,x_0)\in \Pi_t \times \mathbb X_{0_t} \quad \forall t\in\mathbb I_0^\infty \\
\Rightarrow &\;\psi\in \Psi_t, \quad
p\in \Pi_t,\quad x_0\in \mathbb X_{0_t},\quad  \forall t\in\mathbb I_0^\infty,\\
\Rightarrow &\;
\Psi_t \neq \emptyset, \quad \Pi_t \neq \emptyset,\quad  \mathbb X_{0_t} \neq \emptyset \quad \forall t\in\mathbb I_0^\infty.
\end{align*}

Finally, the update law in \eqref{adaptivelaw}, \eqref{forprojection} ensures that the estimate $\hat x_{0_t}$ is projected onto $\mathbb X_{0_t}$. As a result, the set $\widetilde{\mathbb X}_{0_t} = \mathbb X_{0_t} \ominus \hat x_{0_t}$ is non-empty for all $t\in\mathbb I_0^\infty$.

\section{Proof of Theorem \ref{thm:fixed_rec_feas}}\label{app:thm:fixed_rec_feas}
The proof proceeds by induction. Assume that the COCP \eqref{MPC2} is feasible at time $t$. It is shown that feasibility is preserved at time $t+1$.

Given a feasible solution at time $t$, the system evolves and the point estimates and uncertainty sets are updated. Criterion~\ref{Crdare2} is then evaluated at time $t+1$. If Criterion~\ref{Crdare2} holds, a new COCP is formulated with an updated terminal cost, terminal set, and feedback gain. Since the problem data change, feasibility of this newly formed problem cannot be guaranteed a priori. If the resulting problem $\mathbb P_{t+1}$ is feasible, the algorithm proceeds with its solution. Otherwise, or if Criterion~\ref{Crdare2} does not hold, the backup setup described in \eqref{setup_backup} is used.

Under the backup setup, two situations arise depending on the evolution of the estimates and uncertainty sets.

\begin{itemize}

\item[(i)] \emph{No adaptation of point estimate or uncertainty sets.}  
If there is no update in the point estimate $\hat{\theta}_t$ and the uncertainty sets $\Pi_t$, $\Psi_t$, and $\widehat{\mathbb X}_{0_t}$ after time $t$, the resulting COCP coincides with a standard robust tube-based MPC formulation. Recursive feasibility then follows directly from the classical robust MPC literature \cite{CHISCI20011019,rakovic2012homothetic,mayne2009robust,kogel2017robust}.

Let the feasible solution at $t+1$ for this completely non-adaptive scenario (i.e., no update of point estimates or uncertainty sets after $t$) be denoted using the notation $\underline{(\cdot)}$ as follows:




\begin{subequations}\label{nonada}
\begin{align}
      &  \underline{\mathbb T^{\hat x}_{i|t+1}}=\alpha^*_{i+1|t}\oplus \beta^*_{i+1|t}\underline{\mathbb G_{t+1}}\;\;\forall i\in\mathbb I_0^{N-1}\\
       & \underline{\mathbb T^u_{i|t+1}}=\{  \mathcal Z_2(\underline{\mathfrak s^{[j]}_{i+1|t}}, \mathbb T^{\hat x^*}_{i+1|t},\mathbb T^{u^*}_{i+1|t})\;|\;j\in\mathbb I_1^{H_{t+1}}  \})\nonumber\\
       &\hspace{5cm}\;\;\forall i\in\mathbb I_0^{N-2}\\
      &  \underline{\mathbb T^{\hat x}_{N|t+1}}=\underline{\alpha_{N|t+1}}\oplus \underline{\beta_{N|t+1}} \underline{\mathbb G_{t+1}}\\
      & \underline{\alpha_{N|t+1}}=(\hat A_t+\hat B_t K_t)\alpha_{N|t}^*\\
      & \underline{\beta_{N|t+1}}=\min_{\beta} \{ \beta \;|\; \beta \underline{\mathbb G_{t+1} }\supseteq (\hat A_t+\hat B_t K_t)\beta_{N|t}^* {\mathbb G_{t}}\nonumber\\
      &\hspace{5cm}\oplus \underline{\bar{\mathcal E}_{t+1}}\}\\ 
      &   \underline{\mathbb T^u_{N-1|t+1}}=K_t \underline{\mathbb T^{\hat x}_{N-1|t+1}}.
\end{align} \end{subequations}
The feasible solution \eqref{nonada} is adopted as a candidate solution in the next case to establish recursive feasibility.

\item[(ii)] \emph{Adaptation of uncertainty sets with fixed point estimates.} If the point estimates remain unchanged while the uncertainty sets are updated, the following relations hold: 
\begin{subequations}\label{eq:improved_sets}
\begin{align}
&\hat A_{t+1}=\hat A_t,\;\;\hat B_{t+1}=\hat B_t,\;\;\hat x_{0_{t+1}}=\hat x_{0_t},\\
&\Psi_{t+1}\subseteq\underline{\Psi_{t+1}}= \Psi_t,\;\;\Pi_{t+1}\subseteq \underline{\Pi_{t+1}}= \Pi_t,\\
& \mathbb X_{0_{t+1}}\subseteq \underline{\mathbb X_{0_{t+1}}}=\mathbb X_{0_{t}},\;\;\widetilde{\mathbb X}_{0_{t+1}}\subseteq \underline{\widetilde{\mathbb X}_{0_{t+1}}}=\widetilde{\mathbb X}_{0_{t}} ,\\
& \mathfrak D_{yu_{t+1}}\subseteq \underline{\mathfrak D_{yu_{t+1}}}=\mathfrak D_{yu_{t}} ,\;\;\mathfrak D_{yu_{t+1}}^{RPI}\subseteq  \underline{\mathfrak D_{yu_{t+1}}^{RPI}}=\mathfrak D_{yu_{t}}^{RPI},\\
& \widetilde{\mathbb X}_{t+1,i}\subseteq \underline{\widetilde{\mathbb X}_{t+1,i}}= \widetilde{\mathbb X}_{t,i+1},\;\;\widehat{\mathbb X}_{t+1,i}\supseteq \underline{\widehat{\mathbb X}_{t+1,i}}=\widehat{\mathbb X}_{t,i+1},\\
& \bar{\mathcal X}_{t+1} \subseteq \underline{\bar{\mathcal X}_{t+1}}\subseteq \bar{\mathcal X}_{t},\;\;\mathcal X_{t+1}\supseteq  \underline{\mathcal X_{t+1}} \supseteq \mathcal X_{t},\\
& \mathcal E_{t+1,i} \subseteq \underline{\mathcal E_{t+1,i} }= \mathcal E_{t,i+1},\;\;\bar{\mathcal E}_{t+1} \subseteq \underline{\bar{\mathcal E}_{t+1}} \subseteq \bar{\mathcal E}_{t},\\
& \widehat{\mathbb X}_{TS_{t+1}} \supseteq \underline{\widehat{\mathbb X}_{TS_{t+1}}} \supseteq \widehat{\mathbb X}_{TS_{t}},\\
&\hat{\mathcal E}_{t+1} \subseteq \underline{\hat{\mathcal E}_{t+1}} \subseteq \hat{\mathcal E}_{t},\;\; \mathbb G_{t+1}\subseteq \underline{\mathbb G_{t+1}}\subseteq \mathbb G_{t}.
\end{align}
\end{subequations}

It is proved that \eqref{nonada}, i.e.,
\begin{subequations} \label{adap_rec_feas}   \begin{align}
    &\mathbb T_{i|t+1}^{\hat x}=\underline{\mathbb T_{i|t+1}^{\hat x}}\;\;\forall i\in\mathbb I_0^{N},\;\;
    \mathbb T_{i|t+1}^{u}=\underline{\mathbb T_{i|t+1}^{u}}\;\;\forall i\in\mathbb I_0^{N-1}\\
    &\text{with }\mathbb G_{t+1}=\underline{\mathbb G_{t+1}}
\end{align}
\end{subequations}
is also a feasible solution for the COCP at time $t+1$ with adaptation limited to the uncertainty sets. 

The tube parameterization satisfies the constraints \eqref{2ndterminala}-\eqref{tubes_sc}, \eqref{cons1m} and \eqref{cons2m} for $\mathbb P_{t+1}$. From \eqref{eq:improved_sets}, it is also observed that
 \begin{subequations}\label{RecFeasTube}
     \begin{align}
 &  \mathbb T_{i|t+1}^{\hat x}  =\underline{\mathbb T^{\hat x}_{i|t+1}} \subseteq \mathbb T^{\hat x^*}_{i+1|t}   \subseteq \widehat{\mathbb X}_{t,i+1}\subseteq \widehat{\mathbb X}_{t+1,i} \;\forall i\in\mathbb I_0^{N-1},   \\
 &  \mathbb T^{\hat x}_{N|t+1}=\underline{\mathbb T^{\hat x}_{N|t+1}}\subseteq \underline{\widehat{\mathbb X}_{TS_{t+1}}}\subseteq \widehat{\mathbb X}_{TS_{t+1}},\\
 & \mathbb T^u_{i|t+1}=\underline{\mathbb T^u_{i|t+1}}\subseteq  \mathbb T^u_{i+1|t} \subseteq\mathbb U \;\;\;\forall i\in\mathbb I_0^{N-2},\\
 &\mathbb T^u_{N-1|t+1}=\underline{\mathbb T^u_{N-1|t+1}} =K_t \underline{\mathbb T^{\hat x}_{N-1|t+1}}\subseteq K_t \mathbb T^{\hat x^*}_{N|t}\subseteq \mathbb U.
     \end{align}
 \end{subequations}
 Further, $\forall j\in\mathbb I_1^{H_{t+1}}$, the vertices $\mathfrak s_{i|t+1}^{[j]}\in\mathbb T^{\hat x}_{i|t+1}= \underline{\mathbb T^x_{i|t+1}}$ and $\mathfrak u_{i|t+1}^{[j]}\in \mathbb T^u_{i|t+1}=\underline{\mathbb T^u_{i|t+1}}$ implies 
\begin{gather}\label{cons6feas}
    \begin{aligned}
        \hat{A}_{t} \mathfrak s_{i|t+1}^{[j]}+\hat{B}_{t} \mathfrak u_{i|t+1}^{[j]} &
   \subseteq \underline{\mathbb{T}^{{\hat x}}_{i+1|t+1}}\ominus \underline{\mathcal E_{t+1,i}} \\
   &\subseteq \mathbb{T}^{{\hat x}}_{i+1|t+1} \ominus \mathcal E_{t+1,i} \;\;\; \forall i\in\mathbb{I}_{0}^{N-1}. 
    \end{aligned}
\end{gather}
Combining \eqref{RecFeasTube} and \eqref{cons6feas}, it is proved that the proposed solution in \eqref{nonada} satisfies all the constraints in $\mathbb P_{t+1}$.
\end{itemize}
To summarize, it is first shown that the COCP $\mathbb P_{t+1}$ might be feasible with a completely new setup; if not, then the backup estimates and sets in \eqref{setup_backup} leading to the proposed solution \eqref{adap_rec_feas} ensure that $\mathbb P_{t+1}$ is feasible. Therefore, if the COCP \eqref{MPC2} with Algorithm~\ref{ch7:algoatube} is feasible at time $t$, then it remains feasible at time $t+1$. By repeating the same argument at time $t+1$ under the updated setup, recursive feasibility follows for all subsequent time instants.

\section{Proof of Theorem \ref{thm:stab} Part (2)}\label{app:thm_on_stab}
Stability is established using the standard MPC Lyapunov argument \cite{kouvaritakis2016model}, by considering the optimal cost of the COCP as a candidate Lyapunov function. 
For the estimated state $\hat x_t$, define
\begin{align*}
    &J_t(\hat x_t,\bar\mu_t)\triangleq\sum_{j=1}^{H_t} \Gamma_t^{[j]},\;\;\text{where }\\
    &\Gamma_t^{[j]}\triangleq \sum_{i=0}^{N-1} \left(\| {\mathfrak{s}}_{i|t}^{[j]}\|^2_Q + \| {\mathfrak{u}}_{i|t}^{[j]}\|^2_R \right)+\| {\mathfrak{s}}_{N|t}^{[j]}\|^2_{P_t} . 
\end{align*}
From $t$ to $t+1$, the tube geometry and consequently the number of vertices change from $H_t$ to $H_{t+1}$. This difference is handled using
\begin{align}
&\tilde \rho_{1_t}\triangleq\begin{cases}
0,\;\text{if } H_{t}\leq H_{t+1}\\
1,\;\text{otherwise,}
\end{cases},\;
\tilde \rho_{2_t}\triangleq\begin{cases}
0,\;\text{if } H_{t}\geq H_{t+1}\\
1,\;\text{otherwise,}
\end{cases}\\
\text{and }&\;\gamma_1\triangleq \max_{t\in\mathbb I_0^\infty,\; \hat x\in\mathbb X\ominus \bar{\mathcal X}_0,\; u\in\mathbb U}  N \left(\left\|\hat x\right\|_Q^2+\left\|u\right\|_R^2 \right)+\left\|\hat x\right\|^2_{P_{t+1}} ,
\label{OF:gamma1}
\end{align}
which is finite by Lemma \ref{lemma:nonfset}, Theorems \ref{theo:observer}, \ref{thm:fixed_rec_feas}, and \ref{thm:stab} Part (1).  

From Appendix~\ref{app:thm:fixed_rec_feas}, the quantities $\underline{\mathfrak s_{i|t+1}^{[j]}}$ and $\underline{\mathfrak u_{i|t+1}^{[j]}}$, $\forall j\in\mathbb I_1^{H_{t+1}}$, represent the vertices of $\underline{\mathbb T^{\hat x}_{i|t+1}}$ and the corresponding elements of $\underline{\mathbb T^u_{i|t+1}}$ defined in \eqref{nonada}, respectively. The worst-case scenario is considered with $\mathbb G_{t+1}=\underline{\mathbb G_{t+1}}$ and $H_{t+1}=\underline{H_{t+1}}$. At any time $t+1$, where $t\in\mathbb I_0^\infty$, it follows that
\begin{align}
&J_{t+1}^*(\hat x_{t+1})\leq J_{t+1}(\hat x_{t+1},\mu_{t+1})=\sum_{j=1}^{H_{t+1}} \Gamma_{t+1}^{[j]}\nonumber\\
&=\sum_{j=1}^{H_{t}} \Gamma_{t+1}^{[j]}-\tilde \rho_{1_t}\sum_{j=H_{t+1}+1}^{H_{t}} \Gamma_{t+1}^{[j]}+\tilde \rho_{2_t}\sum_{j=H_t+1}^{H_{t+1}} \Gamma_{t+1}^{[j]}\nonumber\\ 
&\leq \sum_{j=1}^{H_{t}}\left(\sum_{i=0}^{N-1} \left(\left\|\mathfrak{s}_{i|t+1}^{[j]}\right\|^2_Q + \left\| \mathfrak{u}_{i|t+1}^{[j]}\right\|^2_R \right)+\left\|\mathfrak{s}_{N|t+1}^{[j]}\right\|^2_{P_{t+1}}\right) \nonumber\\
&\;\;\;\;\;\;+\tilde \rho_{2_t}(H_{t+1}-H_t)\gamma_1  \hspace{2cm}(\text{using \eqref{OF:gamma1}})\nonumber\\
&= \sum_{j=1}^{H_{t}}\left(\sum_{i=0}^{N-1} \left(\left\| \underline{\mathfrak{s}_{i|t+1}^{[j]}}\right\|^2_Q + \left\| \underline{\mathfrak{u}_{i|t+1}^{[j]}}\right\|^2_R \right)+\left\|\mathfrak{s}_{N|t+1}^{[j]}\right\|^2_{P_{t+1}}\right) \nonumber\\
&\;\;\;\;\;\;+\tilde \rho_{2_t}(H_{t+1}-H_t)\gamma_1  \hspace{2cm}  \left(\text{using }\eqref{RecFeasTube}\right)  \nonumber\\
&\leq \sum_{j=1}^{H_{t}}\left(\sum_{i=0}^{N-2} \left(\left\| {\mathfrak{s}_{i+1|t}^{[j]^*}}\right\|^2_Q + \left\| {\mathfrak{u}_{i+1|t}^{[j]^*}}\right\|^2_R \right)+\left\|\mathfrak{s}_{N|t+1}^{[j]}\right\|^2_{P_{t+1}} \right.\nonumber\\
&\;\;\;\left.+\left\|\mathfrak{s}_{N|t}^{[j]^*}\right\|^2_{Q+K_t^\top R K_t} \vphantom{\sum_{i=1}^{M}}\right)  +\tilde \rho_{2_t}(H_{t+1}-H_t)\gamma_1 \quad\left(\text{using } \eqref{RecFeasTube}\right)\nonumber\end{align}\begin{align}
&=J_t^*(\hat{x}_t)-\sum_{j=1}^{H_t}\left(\left\| {\mathfrak{s}_{0|t}^{[j]^*}}\right\|^2_Q + \left\| {\mathfrak{u}_{0|t}^{[j]^*}}\right\|^2_R + \left\|{\mathfrak{s}_{N|t}^{[j]^*}}\right\|^2_{P_t} \right)\nonumber\\
&+\sum_{j=1}^{H_t}\left(\left\| (\hat A_{t+1}+\hat B_{t+1}K_{t+1})\mathfrak{s}_{N|t}^{[j]^*} + \varepsilon_{t+1,N-1}\right\|^2_{P_{t+1}}\right.\nonumber\\
&\;\;\;\left.+\left\|\mathfrak{s}_{N|t}^{[j]^*}\right\|^2_{Q+K_t^\top R K_t}\right)  +\tilde \rho_{2_t}(H_{t+1}-H_t)\gamma_1\nonumber\\
&\leq (1-\gamma_2) J_t^*(\hat{x}_t) +\tilde \rho_{2_t}(H_{t+1}-H_t)\gamma_1\nonumber\\
&\;\;\;+\sum_{j=1}^{H_{t}} \left( -\left\|  \mathfrak{s}_{N|{t}}^{[j]^*}\right\|^2_{P_t}+\left\|  \mathfrak{s}_{N|{t}}^{[j]^*}\right\|^2_{Q+K_t^\top R K_t}\right. \nonumber\\
&\;\;\;\left.+\left\|  \mathfrak{s}_{N|{t}}^{[j]^*}\right\|^2_{(\hat A_{t+1}+\hat B_{t+1}K_{t+1})^\top P_{t+1}(\hat A_{t+1}+\hat B_{t+1}K_{t+1})} \right)\nonumber\\
&\;\;\;+H_t \max_{t\in\mathbb I_0^\infty, \;\varepsilon\in\bar{\mathcal E}_t}\left\|\varepsilon \right\|^2_{P_{t+1}}, \;(\text{by Cauchy-Schwartz inequality})\nonumber
\end{align}
where $\gamma_2\triangleq \frac{\sum_{j=1}^{H_t}\left(\left\| {\mathfrak{s}_{0|t}^{[j]^*}}\right\|^2_Q + \left\| {\mathfrak{u}_{0|t}^{[j]^*}}\right\|^2_R\right)}{J_t^*(\hat{x}_t)}\Rightarrow\gamma_2\in(0,1]\Rightarrow 1-\gamma_2\in[0,1)$ (by definition of $J_t^*$).

Irrespective of whether new point estimates satisfying Criterion \ref{Crdare2} are used, or the setup in \eqref{setup_backup} is employed, the following holds:
\begin{align*}
P_t-(\hat A_{t+1}+\hat B_{t+1}K_{t+1})^\top P_{t+1}(\hat A_{t+1}+\hat B_{t+1}K_{t+1})\nonumber\\
-Q-K_t^\top R K_t\succeq 0,
\end{align*}
which implies
\begin{align*}
J_{t+1}^*(\hat x_{t+1})
\leq (1-\gamma_2)J_{t}^*(\hat x_{t})
+\tilde \rho_{2_t}(H_{t+1}-H_t)\gamma_1
\nonumber\\
+H_t \max_{t\in\mathbb I_0^\infty,\;\varepsilon\in \bar{\mathcal E}_t}\left\|\varepsilon \right\|^2_{P_{t+1}}.
\end{align*}

For implementation, an upper bound $H_{\max}$ is imposed on the number of vertices of $\mathbb G_t$. Due to recursive feasibility and bounded sets $\mathbb D$, $\Psi_0$ and $\mathbb X_{0_0}$, $\max_{t\in\mathbb I_0^\infty,\;\varepsilon\in \bar{\mathcal E}_t} \left\|\varepsilon\right\|^2_{P_{t+1}}$ is bounded. The matrices $P_{t+1}$ are also bounded since they appear in the cost function. Therefore, defining
\begin{align*}
\gamma_3\triangleq H_{\max} \max_{t\in\mathbb I_0^\infty,\;\varepsilon\in \bar{\mathcal E}_t}\left\|\varepsilon \right\|^2_{P_{t+1}}+(H_{\max}-H_{\min})\gamma_1,
\end{align*}
yields
\begin{align}
J_{t+1}^*(\hat x_{t+1})\leq (1-\gamma_2) J_{t}^*(\hat x_{t})+\gamma_3.\nonumber
\end{align}
This implies that $\exists$ constants $c_1>0$, $c_2\in(0,1)$ and $c_3>0$ such that
\[
\left\|\hat{x}_t\right\| \leq c_1 c_2^t \left\|\hat{x}_0\right\| + c_3 \quad \forall t\in\mathbb I_0^\infty,
\]
thereby establishing robust exponential stability of the adaptive observer.

Further, once learning has converged, $H_t$ remains constant and the term $H_{t+1}-H_t$ becomes zero. Consequently, the observer state converges to a bounded set determined by $\mathbb D$, $\Psi_t$, $\mathbb X_{0_t}$, and the bounded quantities $P_t$, $K_t$, $\hat A_t$, $\hat B_t$, and $\hat x_{0_t}$.

Since the estimation error is bounded by $\widetilde{\mathbb X}_{0_t}$, which is non-increasing, robust exponential stability of the actual plant \eqref{sys1} also follows.

\section{Proof of Theorem \ref{coro:luenberger}}\label{app:coro:luenberger}
With no change in the point estimates and the uncertainty sets, the observer state can be expressed as
\begin{align*}
    \hat x_{t+1}=M_{t+1}\hat p+F^{t+1}\hat x_0    &=(FM_t+\begin{bmatrix}Y_t & U_t\end{bmatrix})\hat p+ F^{t+1}\hat x_0\\
    &=F\hat x_t + (\hat A_t-F)x_t + \hat B_t u_t,
\end{align*}
where the time indices are not used in $\hat p$ and $\hat x_0$ to represent they are unchanged. Subtracting from \eqref{sys1x} yields
\begin{align}
    \tilde x_{t+1}=F\tilde x_t + (A-\hat A_t)x_t+(B-\hat B_t)u_t +d_t.\label{app:eq:xtilde}
\end{align}

Now, consider the case of a Luenberger observer-based output feedback MPC \cite{mayne2006robust,mayne2009robust,kogel2017robust}, where nominal parameters $\hat A$, $\hat B$ are selected to generate a nominal and error dynamics, and a gain $L_{obs}$ is chosen such that $\hat A- L_{obs} C$ is Schur stable. The system state and observer state are expressed as
\begin{align*}
    &z_{t+1}=\hat A z_t + \hat B u_t +\underbrace{(A-\hat A)z_t +(B-\hat B)u_t + d_t}_{\text{lumped additive disturbance}},\\
   & \hat z_{t+1}=\hat A \hat z_t + \hat B u_t + L_{obs}(Cz_t -C \hat z_t),
   \end{align*}respectively. The observer state estimation error dynamics becomes 
   \begin{align*}
 \tilde z_{t+1}=(\hat A - L_{obs} C)\tilde z_t + (A-\hat A)z_t +(B-\hat B)u_t + d_t,
\end{align*}which is similar to \eqref{app:eq:xtilde} since both $F$ and $(\hat A-L_{obs}C)$ are Schur stable; the dynamics becomes equal provided $F$ is chosen to be equal to $(\hat A-L_{obs}C)$. This also highlights the certainty equivalence principle used in designing adaptive observers. The remaining proof follows from the existing literature \cite{CHISCI20011019,mayne2009robust,kogel2017robust}.

\section*{References}
\bibliographystyle{IEEEtran}
\footnotesize{\bibliography{IEEEabrv,reference}}




\end{document}